\documentclass[runningheads,a4paper]{llncs}
%-----------------------------------------------------------------------------------------------------------------------------------------------------------------------------------------------------------------------------------------------------------------
%-----------------------------------------------------------------------------------------------------------------------------------------------------------------------------------------------------------------------------------------------------------------
\usepackage{bm}
\usepackage{amsfonts}
\usepackage{amsmath}
\usepackage{setspace}
\usepackage{amscd}
\usepackage{amssymb}
\usepackage{graphicx}
\usepackage{mathrsfs}
\usepackage[all]{xy}
\usepackage{verbatim}
\usepackage{cases}
\usepackage{cite}
%-----------------------------------------------------------------------------------------------------------------------------------------------------------------------------------------------------------------------------------------------------------------
%-----------------------------------------------------------------------------------------------------------------------------------------------------------------------------------------------------------------------------------------------------------------
\begin{document}
\title{ Model Checking : A Co-algebraic Approach}
\titlerunning{Model Checking : A Co-algebraic Approach}  % abbreviated title (for running head)%
%                                     also used for the TOC unless
%                                     \toctitle is used
%
\author{Jianhua Gao$^{1,2}$ \quad Ying Jiang$^1$}
\authorrunning{Jianhua Gao, Ying Jiang}   % abbreviated author list (for running head)
%
%%%% list of authors for the TOC (use if author list has to be modified)
\tocauthor{Jianhua Gao, Ying Jiang}
\institute{State Key Laboratory of Computer Science\\ Institute of Software, Chinese Academy of Sciences\\Beijing 100190, P.R.China
\and Graduate University of the Chinese Academy of Sciences\\Beijing, 100049, P. R. China\\
\email{\{gaojh, jy\}@ios.ac.cn}}

\maketitle              % typeset the title of the contribution
%-----------------------------------------------------------------------------------------------------------------------------------------------------------------------------------------------------------------------------------------------------------------
%-----------------------------------------------------------------------------------------------------------------------------------------------------------------------------------------------------------------------------------------------------------------
\begin{abstract}
State explosion problem is the main obstacle of model checking. In this paper, we try to solve this problem from a coalgebraic approach. We establish an effective method to prove uniformly the existence of the smallest Kripke structure with respect to bisimilarity, which describes all behaviors of the Kripke structures and no redundancy. We show then this smallest Kripke structure generates a concrete smallest one for each given finite Kripke structure and some kind of infinite ones. This method is based on the existence of the final coalgebra of a suitable endofunctor and can be generalized smoothly to other coalgebraic structures. A naive implementation of this method is developed in Ocaml.\\\\
\textbf{Keywords:} the smallest Kripke structures; Kripke structures; Bisimilarity; Final coalgebra; Coalgebra;
\end{abstract}%摘要在70-150个字，现在是80-90个字的样子。
%-----------------------------------------------------------------------------------------------------------------------------------------------------------------------------------------------------------------------------------------------------------------
%-----------------------------------------------------------------------------------------------------------------------------------------------------------------------------------------------------------------------------------------------------------------
\section{Introduction}
State explosion problem is the main obstacle of model checking. In this paper, we try to solve this problem from a coalgebraic approach. Actually, we prove the existence of the smallest Kripke structure for the class of Kripke structures over $AP$, where $AP$ is the set of atomic propositions. The smallest Kripke structure describes all behaviors of the class of Kripke structures and there is no redundancy, that is there exists an unique homomorphism $f$ from each Kripke structure $K$ to the smallest Kripke structure such that for any two states $s$, $s'$ of $K$, $f(s) = f(s')$ if and only if $s \sim s'$. We show for each given finite Kripke structure and some infinite ones, this smallest Kripke structure generates a concrete smallest one which is bisimulation equivalent to the original one. Suppose that $M$ is the concrete smallest Kripke structure of $K$, according to \cite{Clarke1999ModelChecking}, for every $CTL^{*}$ formula $f$, $M \models f \Leftrightarrow K \models f$. That means the concrete Kripke structure satisfies the same $(CTL^*)$ properties with the original one. This method is based on the existence of the final coalgebra of a suitable endofunctor. Moreover, for each given Kripke structure, we construct the concrete smallest one by extracting the image of the original one from the smallest Kripke structure. This method can be generalized smoothly to other coalgebraic structures. A naive implementation is developed in Ocaml.

Coalgebra is often viewed as a duality of algebra. Coalgebraic structure is always discussed together with observer. General speaking, infinite and dynamic objects are often coalgebraic. The dual between algebra and coalgebra are represented in \cite{Jacobs1997atutorial}. Coalgebras are general dynamical systems, final coalgebras describe behaviour of such systems (often as infinite objects) in which states and observations coincide, bisimilarity expresses observational indistinguishability \cite{Jacobs}. Different methods of coinduction definition and coinduction proof principle are treated in \cite{Davidebisimulation} by Davide Sangiorgi. A general final coalgebra theorem is given by Peter Aczel and Nax Mendler \cite{Aczel1989Final}. According to this theorem, every set-based functor on the category of classes has a final coalgebra. More concretely, the final coalgebra of the functor of labelled transition system is given by Jan Rutten and Daniele Turi \cite{Rutten1994InitialFinal}. In this paper, the construction of the final coalgebra of the functor of Kripke structures is inspired from \cite{Rutten1994InitialFinal}.

There are many different methods to describe systems with infinite state spaces in finite representation. Among them are context-free processes, Basic Parallel Processes, PA-processes, pushdown processes and Petri nets \cite{Burkart00verificationon,BE96,Mayr1998Infinite,Esp97b}. Our algorithm could not be applied to them directly. Simple graph grammars \cite{QuemenerJ95}, rational Kripke models \cite{Bekker2009Symbolic} are infinite Kripke structures. Our algorithm for infinite Kripke structures is based on simple graph grammars.

The notion of bisimulation plays an important role. For instance, determining the bisimilarity is an efficient method to construct the smallest model. Algorithmic solutions to the bisimilarity on a finite structure are well developed \cite{Clarke1999ModelChecking,Dovier2004Bisimulation}. While minimization algorithm for symbolic bisimilarity are also studied in \cite{Bonchi2009Minimization} for infinite systems. But the latter seems not to be directly applied to infinite Kripke structures.

The paper is organized as follows: in section \ref{section two} we recall some basic notions in category and model checking. Section \ref{section three} we define the functor of Kripke structures. Section \ref{section_final} we construct the final coalgebra for this functor and set the smallest Kripke structure. Section \ref{section five} we construct the concrete smallest Kripke structure. Finally, we give two examples in Section \ref{Examples}.
%-----------------------------------------------------------------------------------------------------------------------------------------------------------------------------------------------------------------------------------------------------------------
%-----------------------------------------------------------------------------------------------------------------------------------------------------------------------------------------------------------------------------------------------------------------
\section{Preliminaries}\label{section two}
Definitions and theorems of this section about category and model checking are following \cite{Clarke1999ModelChecking, Jacobs, Rutten1994InitialFinal}. The reader is referred to \cite{Pierce1991BasicCategory, Jacobs1997atutorial, Jacobs, Rutten1994InitialFinal, Clarke1999ModelChecking} for further details.
\subsection{Coalgebra}
\begin{definition}[Coalgebra]\label{coalgebra}
Let  $F$ be an endofunctor on category {\em K}. An F-{\em coalgebra} is a pair $(X, \alpha : X \rightarrow F(X))$. An {\em homomorphism} $f: (X, \alpha) \rightarrow (Y, \beta)$ is an arrow $ f :  X  \rightarrow Y $ such that $\beta \circ f = F(f) \circ \alpha$%the following diagram commutes.
%\[
%\begin{CD}
%X @>{f}>> Y\\
%@V{\alpha}VV @VV{\beta}V\\
%F(X) @>>{F(f)}> F(Y)
%\end{CD}
%\]
\end{definition}
\begin{definition}[Bisimulation]\label{bisimulation_coalgebra}
An F-{\em bisimulation} between two F-coalgebras (A, $\alpha$) and (B, $\beta$) is a relation R $\subseteq$ A$\times$B that can be extended to an F-coalgebra (R, $\gamma$), for some $\gamma$ : R $\rightarrow$ F(R), such that its projections $\pi_{1}$ : R$\rightarrow$A and $\pi_{2}$ : R$\rightarrow$ B are homomorphisms of F-coalgebras:
\end{definition}
\[
\begin{CD}
A @<{\pi_{1}}<< R @>{\pi_{2}}>> B\\
@V{\alpha}VV @V{\gamma}VV @VV{\beta}V\\
F(A) @<<{F(\pi_{1})}< F(R) @>>{F(\pi_{2})}> F(B)
\end{CD}
\]
Actually, all of F-coalgebras form a category with coalgebras as objects and bisimulations as arrows.
\begin{definition}[Bisimularity]
 The {\em bisimularity} over an F-coalgebra $(A, \alpha)$, written $\sim_A$, is the union of all bisimulations, that is, $\sim_A=\bigcup\{R \subseteq A \times A| R\ is\ a\ F-bisimulation\ over\ (A, \alpha)\}$
\end{definition}
\begin{definition}[Final Coalgebra]\label{final_coalgebra}
An F-coalgebra (A, $\alpha$) is called {\em final} if for any F-coalgebra (B, $\beta$) there exists an unique homomorphism f:(B, $\beta$)$\rightarrow$ (A, $\alpha$).It is {\em weakly final} if there exists at least one such homomorphism.
\end{definition}
\begin{theorem}\label{final_fixed_points}
Let F be a functor.
\begin{enumerate}
  \item Final coalgebras, if they exist, are uniquely determined (up-to-isomorphism).
  \item Final F-coalgebras (A, $\alpha$) are fixed points of F; that is ,$\alpha$ : A$\rightarrow$ F(A) is an isomorphism.
\end{enumerate}
\end{theorem}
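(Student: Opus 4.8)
The plan is to treat both parts as direct consequences of the universal property in Definition~\ref{final_coalgebra}, the second part being the classical Lambek lemma. The single observation I would exploit throughout is that the identity arrow $\mathrm{id}_A : (A,\alpha) \to (A,\alpha)$ is trivially a homomorphism (the equation $\alpha \circ \mathrm{id}_A = F(\mathrm{id}_A)\circ\alpha$ holds since $F$ preserves identities), so finality forces \emph{any} endomorphism of a final coalgebra to coincide with the identity.

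For part~(1), I would suppose $(A,\alpha)$ and $(A',\alpha')$ are both final. Finality of $(A,\alpha)$ yields a unique homomorphism $f : (A',\alpha') \to (A,\alpha)$, and finality of $(A',\alpha')$ a unique $g : (A,\alpha) \to (A',\alpha')$. Forming the composites $g \circ f$ and $f \circ g$, which are again homomorphisms (composition of homomorphisms is a homomorphism, a routine check from Definition~\ref{coalgebra}), I obtain endomorphisms of $(A',\alpha')$ and $(A,\alpha)$ respectively. By the observation above both equal the respective identities, so $f$ and $g$ are mutually inverse and $(A,\alpha) \cong (A',\alpha')$.

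For part~(2), the idea is to build a second coalgebra by applying $F$ to the final one. Since $F(\alpha) : F(A) \to F(F(A))$, the pair $(F(A), F(\alpha))$ is an $F$-coalgebra. First I note that $\alpha$ itself is a homomorphism $\alpha : (A,\alpha) \to (F(A),F(\alpha))$, because the required equation $F(\alpha)\circ\alpha = F(\alpha)\circ\alpha$ holds trivially. Finality of $(A,\alpha)$ then provides a unique homomorphism $h : (F(A),F(\alpha)) \to (A,\alpha)$, that is $\alpha \circ h = F(h)\circ F(\alpha)$. Composing, $h \circ \alpha$ is an endomorphism of $(A,\alpha)$ and hence equals $\mathrm{id}_A$ by the observation above. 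For the reverse composite I would compute
\[
\alpha \circ h = F(h)\circ F(\alpha) = F(h\circ\alpha) = F(\mathrm{id}_A) = \mathrm{id}_{F(A)},
\]
using functoriality of $F$. Thus $h$ is a two-sided inverse of $\alpha$, so $\alpha$ is an isomorphism.

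The only delicate point I anticipate is the step asserting $h\circ\alpha = \mathrm{id}_A$: it relies on recognising $(F(A),F(\alpha))$ as a genuine $F$-coalgebra and on the uniqueness clause of finality applied to the endomorphism $h\circ\alpha$, rather than on any concrete description of $A$. Everything else is diagram-chasing with the defining homomorphism identity $\beta\circ f = F(f)\circ\alpha$.
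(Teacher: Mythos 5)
Your proof is correct. The paper itself gives no proof of this theorem --- it is stated as background material imported from the cited references (Jacobs, Rutten--Turi) --- and your argument is exactly the classical one those sources use: for part (1), the observation that the identity is a coalgebra homomorphism forces the two composites $g \circ f$ and $f \circ g$ of the canonical maps between two final coalgebras to be identities; for part (2), Lambek's lemma, where every step checks out against the paper's homomorphism convention $\beta \circ f = F(f) \circ \alpha$, including the key computation $\alpha \circ h = F(h) \circ F(\alpha) = F(h \circ \alpha) = F(\mathrm{id}_A) = \mathrm{id}_{F(A)}$.
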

\begin{theorem}\label{extensional}
A final F-coalgebra (A, $\alpha$) is strongly extensional, that is, for all a, $a' \in$ A, if a $\sim_{A}$
$a'$ then a = $a'$.
\end{theorem}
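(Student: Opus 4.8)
The plan is to prove that any final $F$-coalgebra $(A,\alpha)$ is strongly extensional, i.e. that bisimilar states are equal. The key observation is that the bisimilarity relation $\sim_A$ is itself a bisimulation over $(A,\alpha)$ (being the union of all bisimulations, and unions of bisimulations are again bisimulations), so by Definition~\ref{bisimulation_coalgebra} there exists some $\gamma : {\sim_A} \rightarrow F({\sim_A})$ making the two projections $\pi_1, \pi_2 : {\sim_A} \rightarrow A$ into $F$-coalgebra homomorphisms from $({\sim_A}, \gamma)$ to $(A,\alpha)$.

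\medskip

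First I would make $({\sim_A}, \gamma)$ into an $F$-coalgebra as above, turning $\pi_1$ and $\pi_2$ into homomorphisms $({\sim_A},\gamma) \rightarrow (A,\alpha)$. Next I would invoke the finality of $(A,\alpha)$ from Definition~\ref{final_coalgebra}: finality says that from the coalgebra $({\sim_A},\gamma)$ there is a \emph{unique} homomorphism into $(A,\alpha)$. But $\pi_1$ and $\pi_2$ are both such homomorphisms, so by uniqueness $\pi_1 = \pi_2$. Finally, for any $a, a' \in A$ with $a \sim_A a'$, the pair $(a,a')$ lies in ${\sim_A}$, and applying $\pi_1 = \pi_2$ to this pair gives $a = \pi_1(a,a') = \pi_2(a,a') = a'$, which is exactly strong extensionality.

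\medskip

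The main obstacle, and the only point requiring genuine care, is the first step: verifying that $\sim_A$, defined as a union of bisimulations, is itself a bisimulation and therefore carries a coalgebra structure $\gamma$ fitting into the square of Definition~\ref{bisimulation_coalgebra}. This amounts to checking that bisimulations are closed under (arbitrary) union in the relevant category, which in the concrete $\mathbf{Set}$-based setting follows from the fact that $F$ preserves the relevant structure (e.g. weak pullbacks), so that a union of relations each equipped with a coalgebra structure can be given a coherent coalgebra structure compatible with the projections. Once $\gamma$ is in hand, the remainder of the argument is a direct application of the uniqueness clause of finality and involves no further calculation.
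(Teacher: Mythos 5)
Your overall strategy is sound and is in fact the standard literature argument: exhibit $\sim_A$ as an $F$-coalgebra whose two projections are homomorphisms into $(A,\alpha)$, invoke the uniqueness half of finality to get $\pi_1 = \pi_2$, and read off $a = a'$ for every pair $(a,a') \in {\sim_A}$. Note that the paper itself gives no proof of Theorem~\ref{extensional}; it is recalled in the Preliminaries from the coalgebra literature (Jacobs, Rutten--Turi), so there is no in-paper proof to diverge from, and yours is essentially the textbook one.

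There is, however, a flaw in the step you yourself flag as the delicate one. You justify ``$\sim_A$ is a bisimulation'' by appealing to $F$ preserving weak pullbacks. Two problems with this: first, the theorem as stated carries no such hypothesis on $F$, so a proof that genuinely needed it would only establish a weaker statement; second, weak-pullback preservation is the condition needed for closure of bisimulations under relational \emph{composition}, not under \emph{union}. Closure under union is unconditional and needs only functoriality: given bisimulations $R_i$ with structures $\gamma_i : R_i \rightarrow F(R_i)$ and inclusions $\iota_i : R_i \hookrightarrow R$ into $R = \bigcup_i R_i$, define $\gamma(z) = F(\iota_i)(\gamma_i(z))$ for some $i$ with $z \in R_i$; then $F(\pi_1^R) \circ \gamma (z) = F(\pi_1^R \circ \iota_i)(\gamma_i(z)) = F(\pi_1^{R_i})(\gamma_i(z)) = \alpha(\pi_1^{R_i}(z)) = \alpha(\pi_1^R(z))$, and likewise for $\pi_2$. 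Alternatively, you can sidestep the union lemma entirely: for \emph{each} bisimulation $R$ on $(A,\alpha)$ the projections $\pi_1, \pi_2 : (R,\gamma) \rightarrow (A,\alpha)$ are homomorphisms, so finality already forces $\pi_1 = \pi_2$, i.e.\ $R \subseteq \{(a,a) \mid a \in A\}$; since this holds for every bisimulation, it holds for their union $\sim_A$. Either repair makes your argument complete with no added hypothesis on $F$.
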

\begin{definition}[Kernel]\label{kernel}
The {\em kernel} of a function h : X $\rightarrow$ Y is the set
\begin{center} $K_{h} =\{(x,x')\in X \times X| h(x) = h(x')\}$.\end{center}
\end{definition}
\begin{definition}[Weakly preserve kernel]
The functor F : \textbf{Set}$\rightarrow$ \textbf{Set} {\em weakly preserves kernels} if $K_{F(f)}$ can be injectively mapped into $F(K_{f})$ for all functions $f$ in \textbf{Set}.
\end{definition}
\begin{theorem}\label{preserves_kernel}
Let F weakly preserve kernels. Let (A, $\alpha$) be a final F-coalgebra and ( B, $\beta$) be any F-coalgebra. Let h be the unique homomorphism from ( B, $\beta$) to ( A, $\alpha$). For all b, $b'$ $\in$ B, b $\sim_{B}$ $b'$ if and only if h(b) = h($b'$).
\end{theorem}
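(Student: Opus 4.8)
The statement is a biconditional, and I would prove the two implications separately; only the ``if'' direction needs the hypothesis that $F$ weakly preserves kernels.

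For the ``only if'' direction, suppose $b \sim_B b'$. Then $(b,b')$ lies in some $F$-bisimulation $R \subseteq B\times B$, carrying a structure $\gamma: R \to F(R)$ whose projections $\pi_1,\pi_2: R \to B$ are homomorphisms $(R,\gamma)\to(B,\beta)$. Since homomorphisms compose, both $h\circ\pi_1$ and $h\circ\pi_2$ are homomorphisms $(R,\gamma)\to(A,\alpha)$. By finality of $(A,\alpha)$ (Definition \ref{final_coalgebra}) there is exactly one such homomorphism, so $h\circ\pi_1 = h\circ\pi_2$; evaluating at $(b,b')$ gives $h(b)=h(b')$. Notice this half uses only finality, not kernel preservation.

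For the ``if'' direction the plan is to show that the kernel $K_h$ (Definition \ref{kernel}) is itself an $F$-bisimulation over $(B,\beta)$; granting this, $h(b)=h(b')$ means $(b,b')\in K_h\subseteq{\sim_B}$, hence $b\sim_B b'$. Write $p_1,p_2: K_h\to B$ for the two projections. I must manufacture a map $\gamma: K_h\to F(K_h)$ with $F(p_i)\circ\gamma=\beta\circ p_i$ for $i=1,2$, which is exactly the assertion that $p_1,p_2$ are homomorphisms. First I would check that $\langle\beta\circ p_1,\ \beta\circ p_2\rangle$ maps $K_h$ into $K_{F(h)}$: for $(b,b')\in K_h$ we have $h(b)=h(b')$, and since $h$ is a homomorphism ($F(h)\circ\beta=\alpha\circ h$) it follows that $F(h)(\beta(b))=\alpha(h(b))=\alpha(h(b'))=F(h)(\beta(b'))$, so $(\beta(b),\beta(b'))\in K_{F(h)}$. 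This yields a map $m: K_h\to K_{F(h)}$ whose components are $\beta\circ p_1$ and $\beta\circ p_2$.

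Now the hypothesis enters: weak preservation of kernels supplies an injection $j: K_{F(h)}\hookrightarrow F(K_h)$, which I would take to be compatible with the projections, meaning $F(p_i)\circ j$ equals the $i$-th projection $K_{F(h)}\to F(B)$ (equivalently, $j$ sections the canonical comparison map $F(K_h)\to K_{F(h)}$ that sends $t$ to $(F(p_1)(t),F(p_2)(t))$). Setting $\gamma:=j\circ m$ then gives $F(p_i)\circ\gamma=F(p_i)\circ j\circ m=\beta\circ p_i$, so $p_1$ and $p_2$ are indeed homomorphisms and $K_h$ is a bisimulation, completing the argument. The main obstacle is precisely this last construction: the natural ``transition'' $m$ on pairs lands in $K_{F(h)}$ rather than in $F(K_h)$, and in general there is no way to lift it back into $F(K_h)$ — it is exactly weak kernel preservation that licenses the lift. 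The delicate point to verify is that the supplied injection $j$ respects both projections, since otherwise the commuting squares defining a homomorphism need not hold.
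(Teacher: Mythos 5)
Your proof is correct, but note that there is nothing in the paper to compare it against: Theorem \ref{preserves_kernel} is stated in the preliminaries as a known result imported from the literature (with citations) and is never proved in the paper, so your argument fills a gap rather than paralleling an in-paper proof. Both directions are sound. The ``only if'' half is the usual finality argument: $h\circ\pi_1$ and $h\circ\pi_2$ are both homomorphisms from $(R,\gamma)$ to the final coalgebra, hence equal by uniqueness. The ``if'' half correctly exhibits the kernel $K_h$ as a bisimulation by lifting the map $m=\langle\beta\circ p_1,\beta\circ p_2\rangle : K_h \to K_{F(h)}$ (well-defined because $h$ is a homomorphism) through the injection supplied by weak kernel preservation.

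Your closing caveat is exactly the right thing to worry about, and it exposes a real looseness in the paper's definition of ``weakly preserves kernels'': an \emph{arbitrary} injection $j : K_{F(f)} \hookrightarrow F(K_f)$ is useless for your construction, since without $F(p_i)\circ j$ equalling the $i$-th projection of $K_{F(h)}$ the required squares $F(p_i)\circ\gamma = \beta\circ p_i$ need not commute. What the proof needs is that $j$ be a section of the canonical comparison map $t \mapsto (F(p_1)(t),F(p_2)(t))$, equivalently that this canonical map $F(K_f)\to K_{F(f)}$ be surjective, which is the standard formulation of weak kernel (pullback) preservation. Reassuringly, the concrete injection the paper constructs for $\mathcal{P}_f$ in Proposition \ref{p_f_weakly_preserve_kernel}, namely $((S,V_1),(S,V_2)) \mapsto (S,\{(v_1,v_2)\mid v_1\in V_1,\ v_2\in V_2,\ f(v_1)=f(v_2)\})$, \emph{is} such a section: the kernel condition on $(V_1,V_2)$ guarantees that $\mathcal{P}_f(p_i)$ applied to it returns $(S,V_i)$. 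So your proof applies verbatim to the functor the paper actually uses, provided its definition is read in this stronger, standard sense.
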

\begin{definition}[Chain]
A {\em chain} in a category \textbf{Set} is a diagram of the following form:\\
$\Delta = X_{0}\xleftarrow{f_{0}}X_{1}\xleftarrow{f_{1}}X_{2}\xleftarrow{f_{2}}\cdots$
\end{definition}
\begin{definition}\label{limit_def}
  A {\em limit} of the chain $\Delta = X_{0}\xleftarrow{f_{0}}X_{1}\xleftarrow{f_{1}}X_{2}\xleftarrow{f_{2}}\cdots$, if it exists, is an object $Z \in Set$ with a collection of arrows $(Z \xrightarrow[]{\zeta_n} X_n)_{n \in \mathbb{N}}$ satisfying $f_{n+1} \circ \zeta_{n+1} = \zeta_n$, with the following universal property. For each object $Y \in Set$ with arrows $g_n: Y \rightarrow X_n$ such that $f_{n+1} \circ g_{n+1} = g_n$, there is an unique map $h: Y \rightarrow Z$ with $\zeta_n \circ h = g_n$, for each $n \in \mathbb{N}$.
\end{definition}
\begin{definition}[$\omega$-continous]\label{continous_def}
Let $Z \in Set$ with a collection of arrows $(Z \xrightarrow[]{\zeta_n} X_n)_{n \in \mathbb{N}}$ be the limit of the chain $\Delta = X_{0}\xleftarrow{f_{0}}X_{1}\xleftarrow{f_{1}}X_{2}\xleftarrow{f_{2}}\cdots$, a functor $F: Set \rightarrow Set$ is said to {\em preserve limit} if $(F(X_n) \xleftarrow[]{F(\zeta_n)} F(Z))_{n \in \mathbb{N}}$ is the limit of the chain $\Delta' = F(X_{0})\xleftarrow{F(f_{0})}F(X_{1})\xleftarrow{F(f_{1})}F(X_{2})\xleftarrow{F(f_{2})}\cdots$ resulting from applying $F$. The functor $F$ called  $\omega$-{\em continuous} if it preserves limits of all chains. Another way to formulate this is: the induced map $F(Z) \rightarrow Z$ is an isomorphism.
\end{definition}
\begin{theorem}\label{limit_exists_set}
 For a chain $\Delta = D_{0}\xleftarrow{f_{0}}D_{1}\xleftarrow{f_{1}}D_{2}\xleftarrow{f_{2}}\cdots$ in \textbf{Set}, the limit $Z$ of the chain is a
subset of the infinite product $\prod\limits_{n\in \mathbb{N}}{D_{n}}$ given by, Z = $\{(d_{0},d_{1},d_{2},\ldots)|$ $\forall n \in \mathbb{N}\ d_{n} \in D_{n} \wedge f_{n}(d_{n+1})=d_{n}\}$.
\end{theorem}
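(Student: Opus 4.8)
The plan is to verify directly that the explicitly described set $Z$, equipped with the coordinate projections, satisfies the defining universal property of a limit from Definition \ref{limit_def}. First I would introduce the candidate cone: define $\zeta_n : Z \to D_n$ by $\zeta_n\big((d_0, d_1, d_2, \ldots)\big) = d_n$, the projection onto the $n$-th coordinate. The compatibility of this cone with the chain is immediate from the membership condition defining $Z$: for any $(d_k)_k \in Z$ we have $f_n(d_{n+1}) = d_n$ by definition, which is exactly the statement $f_n \circ \zeta_{n+1} = \zeta_n$.

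The heart of the argument is the universal property. Given any set $Y$ together with a compatible family $g_n : Y \to D_n$ satisfying $f_n \circ g_{n+1} = g_n$, I would define $h : Y \to Z$ by $h(y) = \big(g_0(y), g_1(y), g_2(y), \ldots\big)$. The key point---and the only step that genuinely uses the hypothesis on the $g_n$---is that $h$ actually lands in $Z$: for each $n$ the equation $f_n(g_{n+1}(y)) = g_n(y)$ is precisely the coherence condition $f_n(d_{n+1}) = d_n$ required for the tuple $(g_k(y))_k$ to belong to $Z$. Thus $h$ is well defined, and by construction $\zeta_n \circ h = g_n$ for every $n$.

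It remains to establish uniqueness. Suppose $h' : Y \to Z$ also satisfies $\zeta_n \circ h' = g_n$ for all $n$. Then for every $y \in Y$ and every $n$, the $n$-th coordinate of $h'(y)$ equals $\zeta_n(h'(y)) = g_n(y)$, which is likewise the $n$-th coordinate of $h(y)$. Since an element of the product $\prod_{n} D_n$ is determined by all of its coordinates, $h'(y) = h(y)$, hence $h' = h$. This produces the required unique mediating map and completes the verification that $Z$ is the limit of $\Delta$.

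I expect no genuine obstacle here; the argument is a routine unwinding of the definitions. The only point demanding slight care is the well-definedness of $h$, where one must observe that the coherence constraint carving $Z$ out of the full product $\prod_n D_n$ matches exactly the compatibility hypothesis imposed on the cone $(g_n)$---so that the mediating map is forced to be the tupling of the $g_n$ and automatically respects the membership condition defining $Z$.
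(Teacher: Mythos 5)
Your proof is correct: it is the standard direct verification that the described subset of $\prod_{n}D_n$, equipped with the coordinate projections, satisfies the universal property of Definition \ref{limit_def}, including the well-definedness of the tupling map $h$ and its uniqueness. The paper states Theorem \ref{limit_exists_set} as background material (imported from its references) without giving a proof, so there is nothing to diverge from; your argument is exactly the expected one, and you even use the type-correct compatibility condition $f_n \circ \zeta_{n+1} = \zeta_n$ where the paper's Definition \ref{limit_def} contains an index typo ($f_{n+1} \circ \zeta_{n+1} = \zeta_n$).
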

\begin{theorem}\label{final_exists_continuous}
Each $\omega$-continuous endofunctor $F : \textbf{Set} \rightarrow \textbf{Set}$ has a final coalgebra, obtained as limit $Z \rightarrow F(Z)$ of the chain $(F^{n+1}(1) \xrightarrow{F^{n}(!)} F^{n}(1))_{n \in \mathbb{N}}$
\end{theorem}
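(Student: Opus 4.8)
The plan is to realize the final coalgebra as the limit of the terminal chain $\Delta = 1 \xleftarrow{!} F(1) \xleftarrow{F(!)} F^{2}(1) \xleftarrow{F^{2}(!)} \cdots$, where $1$ is the terminal object (a singleton), so that $F^{0}(1) = 1$ and the connecting maps are $F^{n}(!) : F^{n+1}(1) \to F^{n}(1)$. By Theorem \ref{limit_exists_set} this limit exists in \textbf{Set}; write $\zeta_n : Z \to F^n(1)$ for its projections, satisfying $F^n(!) \circ \zeta_{n+1} = \zeta_n$. First I would invoke $\omega$-continuity: since $F$ preserves this limit, the cone $(F(\zeta_n) : F(Z) \to F^{n+1}(1))_{n \in \mathbb{N}}$ is a limit of the shifted chain $\Delta' = F(1) \xleftarrow{F(!)} F^{2}(1) \xleftarrow{F^{2}(!)} \cdots$.

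Next I would observe that $\Delta'$ is simply $\Delta$ with its trivial head removed. Because $F^{0}(1) = 1$ is terminal, the $n=0$ stage of $\Delta$ imposes no constraint, so the limits of $\Delta$ and of $\Delta'$ are canonically isomorphic via forgetting and restoring the forced zeroth coordinate. Combining this with the previous paragraph, the induced comparison between $F(Z) = \lim \Delta'$ and $Z = \lim \Delta$ is an isomorphism — precisely the reformulation of $\omega$-continuity recorded in Definition \ref{continous_def} — and inverting it if necessary supplies a structure map $\alpha : Z \to F(Z)$ making $(Z, \alpha)$ an $F$-coalgebra.

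It remains to prove finality. Given any $F$-coalgebra $(B, \beta)$, I would build a cone over $\Delta$ by recursion: set $\beta_0 : B \to 1$ the unique map, and $\beta_{n+1} = F(\beta_n) \circ \beta$. A short induction shows $F^{n}(!) \circ \beta_{n+1} = \beta_n$, so by the universal property of the limit (Definition \ref{limit_def}) there is a unique $h : B \to Z$ with $\zeta_n \circ h = \beta_n$ for all $n$. To finish I would verify that $h$ is a coalgebra homomorphism, i.e. $\alpha \circ h = F(h) \circ \beta$ in the sense of Definition \ref{coalgebra}, by comparing both sides after postcomposing with each projection $F(\zeta_n)$ of the limit $F(Z)$ and using the defining relations of $\alpha$ together with $\beta_{n+1} = F(\beta_n) \circ \beta$. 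Uniqueness of $h$ then follows because any homomorphism $g : B \to Z$ satisfies $\zeta_n \circ g = \beta_n$ (again by induction on $n$), forcing $g = h$ by the universal property.

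The main obstacle, I expect, is the homomorphism verification in the last step: one must pin down the structure map $\alpha$ concretely enough — through the shift isomorphism identifying $F(Z)$ with $\lim \Delta'$ and hence with $Z$ — that the chase $F(\zeta_n) \circ \alpha \circ h = F(\zeta_n) \circ F(h) \circ \beta$ closes for every $n$. The cone constructions and the inductions are routine; the delicate point is keeping the two descriptions of $F(Z)$, as $\lim \Delta'$ and as $F$ applied to $\lim \Delta$, coherently aligned, so that $\alpha$ genuinely is a morphism of coalgebras and finality is witnessed by a truly unique homomorphism.
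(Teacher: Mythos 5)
The paper states Theorem~\ref{final_exists_continuous} without proof---it is a preliminary quoted from the cited literature (Jacobs; Rutten--Turi)---so there is no in-paper argument to compare against; your proposal is exactly the standard terminal-sequence construction those sources use, and it is correct: the limit $Z$ of $1 \xleftarrow{!} F(1) \xleftarrow{F(!)} F^{2}(1) \xleftarrow{} \cdots$ exists by Theorem~\ref{limit_exists_set}, $\omega$-continuity identifies $F(Z)$ with the limit of the head-shifted chain, terminality of $1$ makes the head-shift harmless, and finality follows from the recursively built cone $\beta_{0} = {!}$, $\beta_{n+1} = F(\beta_{n}) \circ \beta$. The ``delicate point'' you flag dissolves once the structure map is pinned down by its universal property: the induced comparison $\iota : F(Z) \to Z$ satisfies $\zeta_{n+1} \circ \iota = F(\zeta_{n})$ by construction, hence $F(\zeta_{n}) \circ \alpha = \zeta_{n+1}$ for $\alpha = \iota^{-1}$, and with this identity both the homomorphism check ($F(\zeta_{n}) \circ \alpha \circ h = \beta_{n+1} = F(\zeta_{n}) \circ F(h) \circ \beta$) and the uniqueness induction close by postcomposition with the projections of the limit $F(Z)$.
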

%\begin{definition}
%  Let $F$, $G$ be functors between the categories $C$ and $D$. Then a {\em %natural transformation} $\pi$ from $F$ to $G$ associates to every object $X$ in %$C$ an arrow $\pi_X : F(X) \rightarrow G(X)$ in $D$, such that for every arrow %$f: X \rightarrow Y$ in $C$ we have $\pi_Y \circ F(f)$ $=$ $G(f) \circ \pi_X$
%\end{definition}
%\begin{theorem}\label{final_exists_pf(a)pf(x)}
%Let $F$ and $G$ be two functors from \textbf{Set} to \textbf{Set}. Let $\pi: F %\xrightarrow{.}G$. Suppose $\pi_{X}$ is surjective, for any set $X$. If $F$ has %a final coalgebra, then also $G$ has a final coalgebra.
%\end{theorem}
\begin{theorem}\label{final_exists_pf(a)pf(x)}
Let F and G be two functors from \textbf{Set} to \textbf{Set}. Let $\pi$: F$ \xrightarrow{.}$G be a natural transformation: a family of functions $\{\pi_{X}\}$--one for each set X--with, for any function $f : X \rightarrow Y$, we have $\pi_Y \circ F(f) = G(f) \circ \pi_X$.
%X$\rightarrow$Y,
%\[\begin{CD}
% F(X) @>F(f)>> F(Y)\\
 %@V{\pi_{X}}VV @VV{\pi_{Y}}V\\
 %G(X)@>>{G(f)}>G(Y)
%\end{CD}
%\]
%commutes.
Suppose $\pi_{X}$ is surjective, for any set X. If F has a final F-coalgebra, then also G has a final G-coalgebra.
\end{theorem}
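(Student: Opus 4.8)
The plan is to transport the given final $F$-coalgebra across $\pi$, show that the result is only \emph{weakly} final as a $G$-coalgebra, and then cut it down to a genuinely final one. First I would note that a surjective natural transformation is not actually needed in full force yet: for the transport, $\pi$ induces a functor $\Pi$ from $F$-coalgebras to $G$-coalgebras acting by $(X,\xi)\mapsto (X,\pi_X\circ\xi)$ on objects and as the identity on carriers for morphisms. The only thing to check is that an $F$-homomorphism $f$ stays a $G$-homomorphism, and this is exactly the naturality square $\pi_Y\circ F(f)=G(f)\circ\pi_X$ combined with the homomorphism identity $\eta\circ f=F(f)\circ\xi$. Writing $(A,\alpha)$ for the given final $F$-coalgebra, the candidate $G$-coalgebra is $(A,\gamma)$ with $\gamma:=\pi_A\circ\alpha$.

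Next I would use surjectivity to show $(A,\gamma)$ is weakly final among $G$-coalgebras. Given any $(B,\beta:B\to G(B))$, surjectivity of $\pi_B$ lets me pick (by the axiom of choice) a section $s_B:G(B)\to F(B)$ with $\pi_B\circ s_B=\mathrm{id}$, so $(B,s_B\circ\beta)$ is an $F$-coalgebra. Finality of $(A,\alpha)$ gives an $F$-homomorphism $h:(B,s_B\circ\beta)\to(A,\alpha)$, and applying $\Pi$ together with $\pi_B\circ s_B\circ\beta=\beta$ shows $h$ is a $G$-homomorphism $(B,\beta)\to(A,\gamma)$. Thus every $G$-coalgebra admits at least one homomorphism into $(A,\gamma)$.

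The main obstacle is that $(A,\gamma)$ need not be final: the section $s_B$ is a real choice, so the $G$-homomorphism into $(A,\gamma)$ is generally not unique (already for $F(X)=X\times\{0,1\}$, $G=\mathrm{Id}$ and $\pi$ the projection, the transported coalgebra is $(\{0,1\}^{\omega},\mathrm{tail})$, which receives two maps from the one-point coalgebra). To recover uniqueness I would exploit that the category $\mathbf{Coalg}(G)$ is cocomplete: its forgetful functor to $\textbf{Set}$ creates all small colimits, since the universal property of a colimit in $\textbf{Set}$ induces a unique coalgebra structure on it. I would then run the dual of the Freyd argument: form the wide coequalizer $q:(A,\gamma)\to(E,\varepsilon)$ of the \emph{set} of all endomorphisms of $(A,\gamma)$, which is legitimate because these endomorphisms form a subset of $A^{A}$, and the defining property gives $q\circ\phi=q$ for every endomorphism $\phi$.

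Finally I would verify $(E,\varepsilon)$ is final. It is weakly final, since composing any homomorphism $(B,\beta)\to(A,\gamma)$ with $q$ lands in $(E,\varepsilon)$. For uniqueness, given $u,v:(B,\beta)\to(E,\varepsilon)$ I would take their coequalizer $c:(E,\varepsilon)\to(C,\chi)$, invoke weak finality of $(A,\gamma)$ to get $s:(C,\chi)\to(A,\gamma)$, and observe that $s\circ c\circ q$ is an endomorphism of $(A,\gamma)$; hence $q\circ s\circ c\circ q=q$, and cancelling the epimorphism $q$ on the right yields $(q\circ s)\circ c=\mathrm{id}_E$. So $c$ is a split monomorphism and a coequalizer (hence epi), therefore an isomorphism, which forces $u=v$. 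The points where I expect to spend the most care are the cocompleteness of $\mathbf{Coalg}(G)$ and checking that the joint coequalizer of all endomorphisms is a small colimit and behaves as claimed — this is precisely where the proof relies on $G$ being a $\textbf{Set}$-endofunctor, rather than on any preservation property of $G$.
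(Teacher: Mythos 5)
Your proof is correct, but its second half takes a genuinely different route from the paper's. The first half coincides: the paper (which instantiates this theorem's proof in Section 4 for $F$ and $\mathcal{P}_f$, following Rutten--Turi) also transports the final $F$-coalgebra $(T,\gamma)$ to the $G$-coalgebra $(T,\pi_T\circ\gamma)$ and also obtains weak finality exactly as you do, by using surjectivity of $\pi_B$ (with choice) to pick a section $\rho$ with $\pi_B\circ\rho=\mathrm{id}$, turning any $G$-coalgebra $(B,\beta)$ into an $F$-coalgebra $(B,\rho\circ\beta)$ and pushing the unique $F$-homomorphism back along $\pi$; your observation that the transported coalgebra is only weakly final (with the $\mathrm{tail}$ counterexample) is exactly the issue the paper must also address. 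Where you diverge is the repair of uniqueness: the paper quotients the weakly final coalgebra by its greatest bisimulation, taking $(T_{\sim},(\pi_T\circ\gamma)_{\sim})$, and invokes the Rutten--Turi facts that $\xi_{\sim}$ is a homomorphism, that the quotient is strongly extensional, and that any two homomorphisms into a strongly extensional coalgebra are equal; you instead run the dual Freyd argument inside $\mathbf{Coalg}(G)$, using that the forgetful functor to \textbf{Set} creates small colimits, forming the joint coequalizer $q$ of the set of endomorphisms of the weakly final object, and showing the resulting $(E,\varepsilon)$ is final via the split-mono-plus-epi trick. Both arguments are sound (your cancellation of $q$ is legitimate since $q$, being a one-object colimit cocone, is surjective hence epi, and the coequalizer $c$ is epi). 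The trade-off: the paper's bisimulation quotient yields a concrete description of the final coalgebra --- finitely branching trees modulo bisimilarity --- which the rest of the paper relies on (the smallest Kripke structure, the identification of the kernel of the unique homomorphism with bisimilarity, and the unwinding examples); your argument is more abstract and more general --- it shows that any weakly final object in a cocomplete category generates a final one, with no bisimulation theory and no hypothesis on $G$ beyond being a \textbf{Set}-endofunctor --- but it delivers the final coalgebra only as an abstract colimit, so it would not by itself support the paper's later concrete constructions.
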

%-----------------------------------------------------------------------------------------------------------------------------------------------------------------------------------------------------------------------------------------------------------------
%-----------------------------------------------------------------------------------------------------------------------------------------------------------------------------------------------------------------------------------------------------------------
\subsection{Model Checking}
\begin{definition}[Kripke structure]\label{kripke_structure}
Let AP be a set of atomic proposition. A {\em Kripke structure M} over AP is a quadruple $M =
(S,S_{0},R,L)$ where
\begin{enumerate}
\item S is a finite set of states;
\item $S_{0}\subseteq S$ is the set of initial states;
\item $R \subseteq S \times S$ is a transition relation that must be total, that is, for every state $s \in S$ there is a state $s' \in S$ such that $R(s,s')$;
\item $L : S \rightarrow 2^{AP}$ is a function that lables each state with the set of atomic propositions ture in that state.
\end{enumerate}
\end{definition}
\begin{definition}[Bisimulation relation]\label{bisimulation_relation}
Let $M=( S, R, S_{0}, L)$ and $M'=(S', R', S_{0}^{'}, L')$ be two Kripke structures with the same set of atomic propositions AP. A relation $B\subseteq S\times S'$ is a {\em bisimulation relation} between M and $M'$ if and only if for all $s \in S$ , $s' \in S'$, if $B(s, s')$ then the following conditions hold:
\begin{enumerate}
\item  $L(s)=L'(s')$;
\item  For every state $s_{1} \in S$ such that $R(s, s_{1})$ there is $s'_{1} \in S'$ such that $R'(s', s'_{1})$ and $B(s_{1}, s'_{1})$;
\item  For every state $s'_{1} \in S'$ such that $R'(s', s'_{1})$ there is $s_{1} \in S$ such that $R(s, s_{1})$ and $B(s_{1}, s'_{1})$.
\end{enumerate}
\end{definition}
\begin{definition}[Bisimulation equivalent]\label{bisimulation_equivalent}
The structures M and $M'$ are {\em bisimulation equivalent}(denoted by $M\equiv M'$) if there exists a bisimulation relation B such that for every initial state $s_{0}\in S_{0}$ in M there is an initial state $s'_{0}\in S'_{0}$ in $M'$ such that $B(s_{0}, s'_{0})$. In addition, for every initial state $s'_{0}\in S'_{0}$ in $M'$ there is an initial state $s_{0}\in S_{0}$ in M such that $B(s_{0}, s'_{0})$.
\end{definition}
\begin{theorem}\label{structure_meet}
If $M \equiv M'$ iff for every $CTL^{*}$ formula $\psi$, $M \models \psi \Leftrightarrow M' \models \psi$.
\end{theorem}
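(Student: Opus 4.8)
The plan is to prove the two implications separately, since the stated equivalence bundles together a ``bisimulation implies logical equivalence'' direction and its considerably harder converse. The heart of the forward direction is the following lemma: whenever $B$ is a bisimulation relation between $M$ and $M'$ and $B(s,s')$ holds, then for every $CTL^{*}$ state formula $\phi$ we have $M,s \models \phi \iff M',s' \models \phi$, and correspondingly for path formulas along matched paths. Granting this lemma, the forward direction of the theorem follows directly from Definition \ref{bisimulation_equivalent}: each initial state of $M$ is $B$-related to some initial state of $M'$ and vice versa, so $M \models \psi$ precisely when $M' \models \psi$.

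To prove the lemma I would use simultaneous induction on the structure of state and path formulas. The atomic case is discharged by clause (1) of Definition \ref{bisimulation_relation} (namely $L(s) = L'(s')$), and the Boolean connectives are routine from the induction hypothesis. The interesting cases are the path quantifiers $E\phi$ and $A\phi$ together with the temporal path operators such as $X$ and $U$. Here the essential step is to lift the state-level relation $B$ to whole paths: given any path $\pi = s_0 s_1 s_2 \cdots$ of $M$ with $s_0 = s$, I would apply the ``forth'' condition (clause (2)) repeatedly to build a path $\pi' = s'_0 s'_1 s'_2 \cdots$ of $M'$ with $s'_0 = s'$ and $B(s_i,s'_i)$ for all $i$; symmetrically the ``back'' condition (clause (3)) matches every path of $M'$ by a path of $M$. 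Since matched paths are pointwise $B$-related, the induction hypothesis on the inner state subformulas propagates agreement through $X$ and $U$, and hence through the quantifiers $E$ and $A$.

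For the backward direction I would define a relation $B \subseteq S \times S'$ by declaring $B(s,s')$ to hold iff $s$ and $s'$ satisfy exactly the same $CTL^{*}$ state formulas, and then verify that $B$ is a bisimulation relation in the sense of Definition \ref{bisimulation_relation}. Clause (1) is immediate, since atomic propositions are themselves state formulas. For clauses (2) and (3) I would argue by contradiction: suppose $B(s,s')$ holds and $s$ has a successor $s_1$ while no successor of $s'$ is $B$-related to $s_1$. Since the structures are finite (Definition \ref{kripke_structure}), $s'$ has only finitely many successors $t'_1,\dots,t'_k$, and each fails to be logically equivalent to $s_1$, so there is a state formula $\chi_j$ with $s_1 \models \chi_j$ but $t'_j \not\models \chi_j$. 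Setting $\chi = \bigwedge_{j=1}^{k}\chi_j$, we obtain $s \models EX\chi$ whereas $s' \not\models EX\chi$, contradicting $B(s,s')$. Finally, the initial-state condition of Definition \ref{bisimulation_equivalent} follows because $M \models \psi \iff M' \models \psi$ forces the initial states of the two structures to realize the same state theories.

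The main obstacle is the backward direction, and specifically the separation argument above: it is exactly here that finite branching is indispensable, and the finiteness of $S$ imposed in Definition \ref{kripke_structure} is what guarantees it. Without it one could still distinguish each individual successor of $s'$ from $s_1$, but could not assemble a single formula $EX\chi$ separating $s$ from $s'$ over a possibly infinite successor set, and the relation ``satisfies the same formulas'' would no longer be a bisimulation; this is the familiar Hennessy--Milner phenomenon. The forward direction, by contrast, reduces to bookkeeping once the path-lifting construction is in place.
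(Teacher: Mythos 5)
The paper never proves Theorem \ref{structure_meet}: it is stated in the preliminaries and taken verbatim from \cite{Clarke1999ModelChecking}, so there is no in-paper proof to compare against. Your argument is essentially the standard Browne--Clarke--Grumberg proof from that source, and it is correct in its core: the forward direction by simultaneous induction on state and path formulas, lifting the relation $B$ to paths via the back-and-forth clauses of Definition \ref{bisimulation_relation}; the backward direction by showing that the relation ``satisfies the same $CTL^{*}$ state formulas'' is a bisimulation, using the Hennessy--Milner-style separation argument with $\chi=\bigwedge_{j}\chi_{j}$ and $EX\chi$, which is indeed exactly where the finiteness imposed by Definition \ref{kripke_structure} (finite branching, in fact, suffices) is indispensable. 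Two points deserve tightening, though neither is a fatal gap. First, the initial-state condition of Definition \ref{bisimulation_equivalent} does not simply ``follow'' from $M$ and $M'$ satisfying the same formulas: you must repeat the separation argument at the level of initial states --- if some initial state $s_{0}$ of $M$ were logically equivalent to no initial state of $M'$, pick for each of the finitely many initial states $s'_{j}$ of $M'$ a formula $\chi_{j}$ with $s_{0}\models\chi_{j}$ and $s'_{j}\not\models\chi_{j}$; then $M'\models\neg\bigwedge_{j}\chi_{j}$ while $M\not\models\neg\bigwedge_{j}\chi_{j}$, contradicting the hypothesis --- and this step again uses finiteness of the initial-state sets together with the convention that $M\models\psi$ means every initial state satisfies $\psi$. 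Second, in your successor-separation step you tacitly use closure of state formulas under negation to orient each $\chi_{j}$ so that $s_{1}\models\chi_{j}$ and $t'_{j}\not\models\chi_{j}$; this should be said explicitly. Both repairs are one-liners using tools you already deploy, so your proposal is a sound reconstruction of the proof the paper delegates to its reference.
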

%-----------------------------------------------------------------------------------------------------------------------------------------------------------------------------------------------------------------------------------------------------------------
%-----------------------------------------------------------------------------------------------------------------------------------------------------------------------------------------------------------------------------------------------------------------
\section{The Functor of Kripke Structure}\label{section three}
\subsection{Functor}\label{functor}
 Nothing in the essence of the following approach requires the Kripke structure to be finite. Actually, Kripke structures for real systems are very often infinite. The finiteness constraint is due to our current technology, not to the approach itself. Here, we do allow the set of states of Kripke structures  be infinite. In this section we introduce a particular functor $\mathscr{P}(AP) \times \mathscr{P}(\cdot) : \textbf{Set} \rightarrow \textbf{Set}$, which can express all Kripke structures (the set of states may be infinite)over $AP$. It is defined as follows.
\begin{center}$\mathscr{P}(AP) \times \mathscr{P}(S)=\{(A,V)|  A \subseteq AP, V \subseteq S, V \neq \phi\}$\end{center}
\begin{center}
  ($V \neq \phi$ is because that every state should have at least one successor state)
\end{center}
$\mathscr{P}(AP) \times \mathscr{P}(\cdot)$ maps a function $f : S \rightarrow T$ to the function $\mathscr{P}(AP) \times \mathscr{P}(f):\mathscr{P}(AP) \times \mathscr{P}(S) \rightarrow \mathscr{P}(AP) \times \mathscr{P}(T)$, which is defined as follows.
\begin{center}$\mathscr{P}(AP) \times \mathscr{P}(f)(A,  V)=(A,\{ f(s)\in T|s \in V\})$\end{center}
\begin{definition}[coalgebraic Kripke structure]
Let $\mathcal{A}$ be a $\mathscr{P}(AP)\times \mathscr{P}(\cdot)$-coalgebra $(A, \alpha)$ and $I \subseteq A$, a {\em coalgebraic Kripke structure} over $AP$ is a pair $(\mathcal{A} , I)$.
\end{definition}
\begin{proposition}\label{one-to-one-correspondence}
Let $\mathcal {K}$ be the class of Kripke structures over $AP$ and $\mathcal {C}$ be the class of coalgebraic Kripke structures over $AP$, then $\mathcal {K} \cong \mathcal {C}$.
\end{proposition}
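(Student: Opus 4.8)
The plan is to establish the isomorphism $\mathcal{K} \cong \mathcal{C}$ by constructing a pair of maps that translate Kripke structures into coalgebraic Kripke structures and back, and then verifying these maps are mutually inverse. Since the statement asserts an isomorphism of classes (equivalently, a bijective correspondence that respects the relevant structure), I would first exhibit the two directions explicitly and then check they compose to the identity in each direction.

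First I would define the forward translation $\Phi : \mathcal{K} \to \mathcal{C}$. Given a Kripke structure $M = (S, S_0, R, L)$ over $AP$, I set the carrier to be $S$ itself, take the distinguished subset $I = S_0$, and define the coalgebra map $\alpha : S \to \mathscr{P}(AP) \times \mathscr{P}(S)$ by
\[
\alpha(s) = \bigl(L(s),\, \{s' \in S \mid R(s,s')\}\bigr).
\]
The crucial point to check here is well-definedness: the second component $\{s' \mid R(s,s')\}$ must be nonempty for every $s$, which is exactly guaranteed by the totality of $R$ (condition 3 of Definition~\ref{kripke_structure}). Thus $\alpha(s)$ lands in $\mathscr{P}(AP) \times \mathscr{P}(S)$ as defined. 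This yields a coalgebraic Kripke structure $\Phi(M) = ((S,\alpha), S_0)$.

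Next I would define the backward translation $\Psi : \mathcal{C} \to \mathcal{K}$. Given a coalgebraic Kripke structure $((A,\alpha), I)$, write $\alpha(a) = (A_a, V_a)$ with $A_a \subseteq AP$ and $\emptyset \neq V_a \subseteq A$. I set $S = A$, $S_0 = I$, define $L(a) = A_a$, and define the transition relation by $R(a,a') \iff a' \in V_a$. The point to verify is that $R$ is total, which again follows directly from the constraint $V_a \neq \emptyset$ built into the functor: every state has at least one successor, so condition~3 holds, and the remaining conditions of Definition~\ref{kripke_structure} are immediate.

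Finally I would confirm that $\Psi \circ \Phi$ and $\Phi \circ \Psi$ are the respective identities by unwinding the definitions: the label function $L$ recovers the first component of $\alpha$ and vice versa, while the relation $R$ and the successor set $V$ encode exactly the same data, so no information is lost or added in either direction; the distinguished subsets $S_0$ and $I$ are carried across unchanged. I do not expect any serious obstacle in this argument; the only subtlety worth flagging is the exact match between the totality requirement on $R$ and the nonemptiness requirement $V \neq \emptyset$ in the definition of the functor, which is precisely what makes the correspondence a bijection rather than merely an injection, and this is the step I would state most carefully.
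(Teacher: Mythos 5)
Your proposal is correct and follows essentially the same route as the paper: both define the forward map (carrier $S$, initial set $S_0$, coalgebra map $s \mapsto (L(s), \{s' \mid R(s,s')\})$) and the backward map (recovering $L$ and $R$ from the two projections of the coalgebra map), then observe the two compositions are identities by unwinding definitions. Your explicit check that totality of $R$ matches the nonemptiness constraint $V \neq \emptyset$ is a point the paper leaves implicit, so if anything your write-up is slightly more careful than the original.
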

\begin{proof}
It is sufficient to show that there exist two functions $f: \mathcal {K} \rightarrow \mathcal {C}$ and $g: \mathcal {C} \rightarrow \mathcal {K}$ such that $f \circ g = id_{\mathcal {C}}$ and $g \circ f = id_{\mathcal {K}}$.
\begin{itemize}
  \item We firstly construct the function $f$, mapping each Kripke structure $M = (S,S_{0},R,L)$ to coalgebraic Kripke structure $(\mathcal{A} , I)$ defined as follows:
    \begin{enumerate}
      \item $I = S_{0}$.
      \item $\mathcal{A} = (S,\gamma)$, where $\gamma(s) = (L(s),\{s' | (s,s') \in R\})$
    \end{enumerate}
  \item We then construct the function $g$, mapping each coalgebraic Kripke structure $((A, \gamma), I)$ to Kripke structure $M = (S,S_{0},R,L)$ defined as follows:
    \begin{enumerate}
      \item $S = A$.
      \item $S_{0} = I$.
      \item $R=\{(s_{1},s_{2}) | s_{2} \in \pi_{2}(\gamma(s_{1}))\}$.
      \item $L(s) = \pi_{1}(\gamma(s))$.
    \end{enumerate}
  \item Finally, we need to show that $f \circ g = id_{\mathcal {C}}$ and $g \circ f = id_{\mathcal {K}}$. According to the constructions of $f$ and $g$, it is clearly true.
\end{itemize}
\end{proof}
\begin{example}
  Let $M$ be the Kripke structure as the following figure, we construct a coalgebraic Kripke structure $((S,\gamma), I)$ defined by $S = \{s_1, s_2\}$, $\gamma (s_1) = (\{a\},\{s_{2}\})$, $\gamma (s_2) = (\{b\},\{s_{1}\})$ and $I = \{s_1\}$.
  \begin{center} \includegraphics[width=0.20\textwidth]{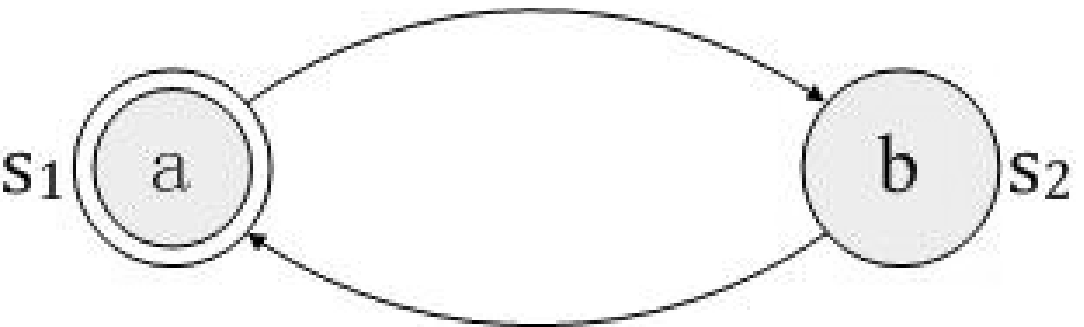}\end{center}
\end{example}

Notice that there does not exist a final coalgebra for the functor $\mathscr{P}(AP)\times \mathscr{P}(\cdot) :\textbf{Set} \rightarrow \textbf{Set}$. According to Theorem \ref{final_fixed_points}, any final coalgebra is a fixed point. However, it is well known the functor $\mathscr{P}(AP)\times \mathscr{P}(\cdot)$ does not have any fixed points.
Therefore we need to restrict the functor $\mathscr{P}(AP)\times \mathscr{P}(\cdot)$ to the functor $\mathscr{P}_{f}(AP)\times \mathscr{P}_{f}(\cdot) : \textbf{Set} \rightarrow \textbf{Set}$ defined as follows:
\begin{center}$\mathscr{P}_{f}(AP)\times \mathscr{P}_{f}(S)=\{(V,W)\in \mathscr{P}(AP)\times \mathscr{P}(S)|V,W$ are finite$\}$;\end{center}
For simplicity, we denote this functor by $\mathcal {P}_f$ from now on. This functor can represents Kripke Structures with finite branch, that means every state should have finite successor states, however the set of states may be infinite.
\begin{proposition}\label{p_f_weakly_preserve_kernel}The functor $\mathcal{P}_f$ weakly preserves kernels.\end{proposition}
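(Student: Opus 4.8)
The plan is to exhibit an explicit injection $\theta : K_{\mathcal{P}_f(f)} \hookrightarrow \mathcal{P}_f(K_f)$ for an arbitrary function $f : S \to T$. First I would unwind both sides. An element of $K_{\mathcal{P}_f(f)}$ is a pair $\big((A,W),(A',W')\big)$ of elements of $\mathcal{P}_f(S)$ with $\mathcal{P}_f(f)(A,W) = \mathcal{P}_f(f)(A',W')$; since $\mathcal{P}_f(f)$ acts as $(A,W)\mapsto(A,f[W])$, where $f[W]=\{f(s)\mid s\in W\}$ denotes the image, this equality forces $A = A'$ together with $f[W] = f[W']$. On the other hand $\mathcal{P}_f(K_f)$ consists of pairs $(A,U)$ with $A\subseteq AP$ finite and $U\subseteq K_f$ a finite nonempty set of pairs $(s,s')$ satisfying $f(s)=f(s')$.

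The natural candidate is the \emph{pullback set}: put $\theta\big((A,W),(A,W')\big) = (A, U)$ with $U = \{(s,s')\in W\times W' \mid f(s)=f(s')\}$. I would then check three things to see that $(A,U)$ really lies in $\mathcal{P}_f(K_f)$: that $U\subseteq K_f$ (immediate from the defining condition $f(s)=f(s')$), that $U$ is finite (because $W\times W'$ is), and that $U$ is nonempty. The nonemptiness is where the kernel hypothesis enters: picking any $s\in W$ (which exists since $W\neq\phi$), the value $f(s)$ lies in $f[W]=f[W']$, so some $s'\in W'$ has $f(s')=f(s)$, whence $(s,s')\in U$.

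For injectivity I would recover the data $\big((A,W),(A,W')\big)$ from its image $(A,U)$. The component $A$ is read off directly, and I claim $W = \pi_1[U]$ and $W' = \pi_2[U]$, where $\pi_1,\pi_2$ are the two projections. The inclusions $\pi_1[U]\subseteq W$ and $\pi_2[U]\subseteq W'$ are immediate from $U\subseteq W\times W'$; the reverse inclusions are exactly the nonemptiness argument applied pointwise, again using $f[W]=f[W']$ so that every $s\in W$ (resp. every $s'\in W'$) acquires a partner. Since the projections of $U$ reconstruct $W$ and $W'$ uniquely, equal images force equal preimages, so $\theta$ is injective.

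I expect the only real content — and hence the main thing to get right — to be the use of the constraint $f[W]=f[W']$ in two places: to guarantee $U\neq\phi$ and, more importantly, to guarantee that the projections of $U$ recover all of $W$ and $W'$ rather than proper subsets. Everything else (finiteness, the $A$-coordinate, and membership in $K_f$) is bookkeeping. This is essentially the standard fact that the finite powerset functor preserves weak pullbacks, specialized here to the pullback of $f$ against itself.
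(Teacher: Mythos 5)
Your proof is correct and takes exactly the same approach as the paper: the paper's injection $i : ((S,V_1),(S,V_2)) \mapsto (S,\{(v_1,v_2) \mid v_1 \in V_1,\, v_2 \in V_2,\, f(v_1)=f(v_2)\})$ is precisely your pullback-set map $\theta$. In fact you go further than the paper, which merely asserts ``Notably, $i$ is an injection'': your argument that $W$ and $W'$ are recovered as the projections of $U$ (using $f[W]=f[W']$) is the detail that justifies this claim.
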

\begin{proof}
Given $f : X \rightarrow Y$, then $\mathcal{P}_f(f) : \mathcal{P}_f(X) \rightarrow \mathcal{P}_f(Y)$. $K_{f} = \{(x_{1},x_{2}) | f(x_{1}) = f(x_{2})\}$, $K_{\mathcal{P}_f(f)} = \{((S,V_{1}),(S,V_{2})) | S \subseteq AP, \forall v_{1} \in V_{1} \exists v_{2} \in V_{2} s.t. f(v_{2})=f(v_{1}),\forall v_{2} \in V_{2} \exists v_{1} \in V_{1} s.t. f(v_{1})=f(v_{2})\}$. We can construct an injection $i : K_{\mathcal{P}_f(f)} \rightarrow \mathcal{P}_f(K_{f})$ as follows:
$i : ((S,V_{1}),(S,V_{2})) \longmapsto (S,\{(v_{1},v_{2}) | v_{1} \in V_{1}, v_{2} \in V_{2}, f(v_{1}) = f(v_{2})\})$. Notably, $i$ is an injection, so $\mathcal{P}_f$ weakly preserves kernels.
\end{proof}
\subsection{Bisimulations}
In this subsection we will show that bisimulation relations of Kripke structures over $AP$ coincide with bisimulations of coalgebras of the functor $\mathcal {P}_f$.
\begin{definition}\label{coalgebraic_bisimulation_equ}
  Let $(\mathcal{A} , I)$, $(\mathcal{A}' , I')$ be two coalgebraic Kripke structure, we said $(\mathcal{A} , I)$ bisimulation equivalent to $(\mathcal{A}' , I')$ (denoted by $(\mathcal{A} , I) \doteq (\mathcal{A}' , I')$ )if there exists an bisimulation $B$ from $\mathcal{A}$ to $\mathcal{A}'$ such that for every $s \in I$ there exists an $s' \in I'$ and $(s, s') \in B$, and for every $s' \in I'$ there exists an $s \in I$ and $(s, s') \in B$.
\end{definition}
\begin{proposition}\label{kripke_structure_coalgebra}
Let $K$, $K'$ be two Kripke Structures and $(\mathcal {A}_{K}, I_K)$, $(\mathcal {A}_{K'}, I_{K'})$ be the corresponding coalgebraic Kripke structures, then $K \equiv K'$ iff $(\mathcal{A}_{K} , I_K) \doteq (\mathcal{A}_{K'}, I_{K'})$.
\end{proposition}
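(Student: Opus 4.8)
The plan is to reduce the proposition to a single structural lemma: a relation $B \subseteq S \times S'$ is a bisimulation relation between the Kripke structures $K$ and $K'$ in the sense of Definition \ref{bisimulation_relation} if and only if $B$ is a $\mathcal{P}_f$-bisimulation between the associated coalgebras $\mathcal{A}_K = (S, \gamma)$ and $\mathcal{A}_{K'} = (S', \gamma')$ in the sense of Definition \ref{bisimulation_coalgebra}. Once this lemma is in hand the proposition is immediate: by the construction of the correspondence $f$ in Proposition \ref{one-to-one-correspondence} we have $I_K = S_0$ and $I_{K'} = S'_0$, so the extra requirement on initial states imposed by $\equiv$ (Definition \ref{bisimulation_equivalent}) and by $\doteq$ (Definition \ref{coalgebraic_bisimulation_equ}) is literally the same condition, namely that $B$ relates the initial states of the two structures in both directions. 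Hence a witnessing $B$ for $K \equiv K'$ is exactly a witnessing $B$ for $(\mathcal{A}_K, I_K) \doteq (\mathcal{A}_{K'}, I_{K'})$.

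For the forward direction of the lemma, I would start from a bisimulation relation $B$ and equip it with a coalgebra structure $\gamma_B : B \to \mathcal{P}_f(B)$ defined on $(s, s') \in B$ by
\[
\gamma_B(s,s') = \bigl(L(s),\ \{(s_1, s'_1) \in B \mid R(s,s_1),\ R'(s',s'_1)\}\bigr).
\]
This is well defined: condition 1 of Definition \ref{bisimulation_relation} gives $L(s) = L'(s')$, and the successor component is a subset of the product of the (finite) successor sets of $s$ and $s'$, hence finite, so $\gamma_B$ indeed lands in $\mathcal{P}_f(B)$. It then remains to check that the two projections $\pi_1, \pi_2$ are coalgebra homomorphisms, i.e. $\gamma \circ \pi_1 = \mathcal{P}_f(\pi_1) \circ \gamma_B$ and symmetrically for $\pi_2$. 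Unfolding $\mathcal{P}_f(\pi_1)\circ\gamma_B$ at $(s,s')$ yields $\bigl(L(s),\ \{s_1 \mid \exists s'_1,\ (s_1,s'_1)\in B,\ R(s,s_1),\ R'(s',s'_1)\}\bigr)$, which must equal $\gamma(s) = (L(s), \{s_1 \mid R(s,s_1)\})$; the inclusion $\supseteq$ is trivial, while $\subseteq$ is exactly the forth condition 2 of Definition \ref{bisimulation_relation}. The homomorphism condition for $\pi_2$ is handled identically using the back condition 3.

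For the converse, I would take a $\mathcal{P}_f$-bisimulation $B$ with witnessing structure $\gamma_B(s,s') = (A_{(s,s')}, W_{(s,s')})$, $W_{(s,s')} \subseteq B$, and read off the three conditions of Definition \ref{bisimulation_relation} from the two homomorphism equations. The $\pi_1$-equation forces $L(s) = A_{(s,s')}$ and $\{s_1 \mid R(s,s_1)\} = \pi_1(W_{(s,s')})$, while the $\pi_2$-equation forces $L'(s') = A_{(s,s')}$ and $\{s'_1 \mid R'(s',s'_1)\} = \pi_2(W_{(s,s')})$; comparing the first components gives condition 1, and chasing an element of $W_{(s,s')}$ through the two projections delivers conditions 2 and 3. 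I expect the only delicate point in the whole argument to be the forward direction, specifically verifying that the candidate $\gamma_B$ makes both projections homomorphisms at once — this is where the forth/back asymmetry of Definition \ref{bisimulation_relation} must be matched precisely against the two commuting squares of Definition \ref{bisimulation_coalgebra}, and where the finite-branching restriction to $\mathcal{P}_f$ is needed to keep $\gamma_B$ inside the functor.
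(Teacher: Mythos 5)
Your proposal is correct and follows essentially the same route as the paper: both reduce the proposition to the lemma that $B$ is a Kripke bisimulation relation iff it is a $\mathcal{P}_f$-bisimulation, both equip $B$ with the same coalgebra structure $\gamma_B(s,s') = (L(s), \{(s_1,s'_1) \in B \mid R(s,s_1), R'(s',s'_1)\})$ in the forward direction, and both recover the label and forth/back conditions by chasing the commuting squares in the converse. Your treatment is in fact slightly more careful than the paper's on two points it leaves implicit: the finiteness check that $\gamma_B$ lands in $\mathcal{P}_f(B)$, and the explicit matching of the initial-state clauses of $\equiv$ and $\doteq$.
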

\begin{proof}
Suppose that $K = (S, R, S_0, L)$, $K' = (S', R', S'_0, L')$. Recall the proof of Proposition \ref{one-to-one-correspondence}, there exist $\alpha$, $\alpha'$ such that $\mathcal {A}_{K} = (S, \alpha)$ and $\mathcal {A}_{K'} = (S', \alpha')$.

It is sufficient to show that for every $B \in S \times S'$, $B$ is a bisimulation relation between $K$, $K'$ iff $B$ is a bisimulation between $\mathcal {A}_{K}$ and $\mathcal {A}_{K'}$.

We firstly prove that if $B$ is a bisimulation relation between $K$ and $K'$ then it is a bisimulation between $\mathcal {A}_{K}$ and $\mathcal {A}_{K'}$. We extend $B$ to $(B, \gamma)$, where $\gamma(s_{1},s_{2}) = (L(s_{1}), \{(s'_{1},s'_{2}) \in B | (s_{1}, s'_{1}) \in R, (s_{2},s'_{2}) \in R'\})$. Now we need to show that $(B, \gamma)$ can make the following diagram commutes.
\begin{displaymath}
\xymatrix{
S \ar[d]_{\alpha} &B \ar[l]_{\pi_{1}} \ar[r]^{\pi_{2}} \ar[d]_{\gamma} &S' \ar[d]^{\alpha'}\\
\mathcal{P}_f(S) &\mathcal{P}_f(B)\ar[l]^{\mathcal{P}_f(\pi_{1})} \ar[r]_{\mathcal{P}_f(\pi_{2})} &\mathcal{P}_f(S') }
\end{displaymath}
Left square commutes, since for any element $(s_{1}, s_{2}) \in B$
\[ \begin{split}
\alpha \circ \pi_{1}(s_{1}, s_{2}) &= \alpha(s_{1})\quad (definition\ of\ \pi_{1})\\
&= (L(s_{1}), \{s'_{1} | (s_{1}, s'_{1}) \in R\})\quad (proof\ of\ Proposition\ \ref{one-to-one-correspondence})
\end{split} \]
\[ \begin{split}
\mathcal{P}_f(\pi_{1}) \circ \gamma(s_{1}, s_{2}) &= \mathcal{P}_f(\pi_{1})(L_{1}(s_{1}),\{(s'_{1},s'_{2}) \in B | (s_{1}, s'_{1}) \in R, (s_{2},s'_{2}) \in R'\})\quad \\
&= (L_{1}(s_{1}), \{s'_{1} | \exists s'_{2},(s'_{1},s'_{2}) \in B,(s_{1}, s'_{1}) \in R, (s_{2},s'_{2}) \in R'\})\quad \\
&= (L_{1}(s_{1}), \{s'_{1} | (s_{1}, s'_{1}) \in R\})\quad (definition\ of\ B)
\end{split} \]
So $\alpha \circ \pi_{1} = \mathcal{P}_f(\pi_{1}) \circ \gamma$, that is the left square commutes. Without loss of generality, the right square also commutes.

We then prove that if $B$ is a bisimulation between $\mathcal {A}_{K}$ and $\mathcal {A}_{K'}$ then it is a bisimulation relation between $K$ and $K'$.
We need to prove that for any element $(s_{1}, s_{2}) \in B$,
\begin{enumerate}
  \item $L(s_{1}) = L'(s_{2})$;
\[ \begin{split}
L(s_{1}) &= \pi_1(\alpha(s_{1}))\quad (the\ proof\ of\ Proposition\ \ref{one-to-one-correspondence})\\
&= \pi_1(\alpha \circ \pi_{1}(s_{1}, s_{2}))\quad (definition\ \pi_{1})\\
&= \pi_1(\mathcal{P}_f(\pi_{1}) \circ \gamma(s_{1}, s_{2}))\quad (left\ saquare\ commutes)\\
&= \pi_1(\mathcal{P}_f(\pi_{2}) \circ \gamma(s_{1}, s_{2}))\quad (definition\ \mathcal{P}_f)\\
&= \pi_1(\beta \circ \pi_{2}(s_{1}, s_{2}))\quad (right\ saquare\ commutes)\\
&= \pi_1(\beta(s_{2}))\quad (definition\ \pi_{2})\\
&= L'(s_{2})\quad (the\ proof\ of\ Proposition\ \ref{one-to-one-correspondence})
\end{split} \]
  \item  For every state $s'_{1}$ such that $R(s_{1}, s'_{1})$ there is $s'_{2}$ such that $R'(s_{2}, s'_{2})$ and $B(s'_{1}, s'_{2})$; For every state $s'_{2}$ such that $R'(s_{2}, s'_{2})$ there is $s'_{1}$ such that $R(s_{1}, s'_{1})$ and $B(s'_{1}, s'_{2})$.
\[ \begin{split}
R(s_{1}) &= \pi_2(\alpha(s_{1}))\quad (the\ proof\ of\ Proposition\ \ref{one-to-one-correspondence})\\
&= \pi_2(\alpha \circ \pi_{1}(s_{1}, s_{2}))\quad (definition\ \pi_{1})\\
&= \pi_2(\mathcal{P}_f(\pi_{1}) \circ \gamma(s_{1}, s_{2}))\quad (the\ left\ square\ commutes)\\
&= \{s'_{1} | \exists s'_{2}\ s.t.\ (s'_{1}, s'_{2}) \in \pi_2(\gamma(s_{1}, s_{2}))\} \quad (definition\ \mathcal{P}_f)\\
&= \{s'_{1} | \exists s'_{2}\ s.t.\ s'_{2} \in \pi_2(\mathcal{P}_f(\pi_{2}) \circ \gamma(s_{1}, s_{2}))\} \quad (definition\ \mathcal{P}_f)\\
&= \{s'_{1} | \exists s'_{2}\ s.t.\ s'_{2} \in \pi_2(\beta \circ \pi_{2}(s_{1}, s_{2}))\ and\  (s'_{1}, s'_{2}) \in B\} \quad (the\ right\ square\ commutes)\\
&= \{s'_{1} | \exists s'_{2}\ s.t.\ s'_{2} \in \pi_2(\beta(s_{2}))\ and\  (s'_{1}, s'_{2}) \in B\} \quad (definition\ \pi_{2})\\
&= \{s'_{1} | \exists s'_{2}\ s.t.\ s'_{2} \in R'(s_{2})\ and\  (s'_{1}, s'_{2}) \in B\} \quad (the\ proof\ of\ Proposition\ \ref{one-to-one-correspondence})\\
Without&\ loss\ of\ generality,\ R'(s_{2}) = \{s'_{2} | \exists s'_{1}\ s.t.\ s'_{1} \in R(s_{1})\ and\  (s'_{1}, s'_{2}) \in B\}.
\end{split}\]
\end{enumerate}
\end{proof}
%-----------------------------------------------------------------------------------------------------------------------------------------------------------------------------------------------------------------------------------------------------------------
%-----------------------------------------------------------------------------------------------------------------------------------------------------------------------------------------------------------------------------------------------------------------
\section{The Smallest Kripke Structure}\label{section_final}
In this section we mainly construct the final coalgebra of the functor $\mathcal{P}_f$. We firstly prove the existence of the final coalgebra of this functor by using Theorem \ref{final_exists_pf(a)pf(x)}.
\subsection{Existence of the Final Coalgebra of the Functor $\mathcal{P}_f$}
To use Theorem \ref{final_exists_pf(a)pf(x)}, we need to find another functor $F : \bm{Set} \rightarrow \bm{Set}$ of which final coalgebra exists and there exists a nature transformation $\pi : F \xrightarrow{.} \mathcal{P}_f$ such that $\pi_X$ is surjective for every set $X$.

Set the functor $F : \bm{Set} \rightarrow \bm{Set}$, for a set $X$, by $$F(X)=\mathop\sum \limits_{1\leq n< \omega}X^{n}+(AP\times X)^{n}$$ and for an arrow $f: X \rightarrow Y$, by $F(f) : F(X) \rightarrow F(Y)$
\makeatletter
\let\@@@alph\@alph
\def\@alph#1{\ifcase#1\or \or $'$\or $''$\fi}\makeatother
\begin{subnumcases}
{F(f)(Z)=}
(f(x_1), f(x_2), \ldots, f(x_n)), &if $Z = (x_1, x_2, \ldots, x_n)$,\nonumber\\
((a_1,f(x_1)), \ldots, (a_m,f(x_m))), &if $Z = ((a_1,x_1), \ldots, (a_m,x_m))$.\nonumber
\end{subnumcases}
\makeatletter\let\@alph\@@@alph\makeatother

Now, we prove that $F$ has a final coalgebra. Indeed the functor $F$ is $\omega$-continuous. For every $\omega$-chain in \textbf{Set}:
  \begin{center}
    $\Delta = D_{0} \xleftarrow{f_{0}} D_{1} \xleftarrow{f_{1}} D_{2} \xleftarrow{f_{2}} \cdots$
  \end{center}
  The limit of $\Delta$ is $Z = \{(d_{0},d_{1},d_{2},\ldots)|\forall n \in \mathbb{N}.d_{n} \in D_{n} \wedge f_{n}(d_{n+1})=d_{n}\}$ to use Theorem \ref{limit_exists_set}.
  Applying $F$ to $\Delta$ we can obtain the $\omega$-chain $F(\Delta)$:
  \begin{center}
    $F(\Delta) = F(D_{0}) \xleftarrow{F(f_{0})} F(D_{1}) \xleftarrow{F(f_{1})} F(D_{2}) \xleftarrow{F(f_{2})} \cdots$
  \end{center}
  Similarly, the limit of $F(\Delta)$ is $Z' = \{(d'_{0},d'_{1},d'_{2},\ldots)|\forall n \in \mathbb{N}.d'_{n} \in F(D_{n}) \wedge F(f_{n})(d'_{n+1})=d'_{n}\}$.
  It is sufficient to show that $F(Z) \cong Z'$. We can construct a function $\gamma : F(Z) \rightarrow Z'$ as follows:
  \begin{enumerate}
    \item $\gamma(x) = ((d^{0}_{0},d^{1}_{0},\ldots,d^{n}_{0}),(d^{0}_{1},d^{1}_{1},\ldots,d^{n}_{1}),\ldots)$\\ $(x=((d^{0}_{0},d^{0}_{1},\ldots),\ldots,(d^{n}_{0},d^{n}_{1},\ldots)) \in Z^n)$
    \item $\gamma(x) = (((a_{0},d^{0}_{0}),(a_{1},d^{1}_{0}),\ldots,(a_{n},d^{n}_{0})),((a_{0,}d^{0}_{1}),(a_{1},d^{1}_{1}),\ldots,(a_{n},d^{n}_{1})),\ldots)$\\ $(x=((a_{0},(d^{0}_{0},d^{0}_{1},\ldots)),\ldots,(a_{n},(d^{n}_{0},d^{n}_{1},\ldots))) \in (AP \times Z)^n)$
  \end{enumerate}
  We need to construct another function $\theta : Z' \rightarrow F(Z)$ as follows:
    \begin{enumerate}
    \item $\theta(x) = ((d^{0}_{0},d^{0}_{1},\ldots),\ldots,(d^{n}_{0},d^{n}_{1},\ldots))$\\
    $(x=((d^{0}_{0},d^{1}_{0},\ldots,d^{n}_{0}),(d^{0}_{1},d^{1}_{1},\ldots,d^{n}_{1}),\ldots),\ where\ (d^{0}_{i},d^{1}_{i},\ldots,d^{n}_{i}) \in (D_{i})^n)$
    \item $\theta(x) = ((a_{0},(d^{0}_{0},d^{0}_{1},\ldots)),\ldots,(a_{n},(d^{n}_{0},d^{n}_{1},\ldots)))$\\
    $(x= (((a_{0},d^{0}_{0}),(a_{1},d^{1}_{0}),\ldots,(a_{n},d^{n}_{0})),((a_{0,}d^{0}_{1}),(a_{1},d^{1}_{1}),\ldots,(a_{n},d^{n}_{1})),\ldots),\\ where ((a_{0},d^{0}_{i}),(a_{1},d^{1}_{i}),\ldots,(a_{n},d^{n}_{i})) \in (AP \times D_{i})^n)$
  \end{enumerate}
  It is notable that $\theta \circ \gamma = id_{F(Z)}$, $\gamma \circ \theta = id_{Z'}$, so $F(Z) \cong Z'$, that is F is $\omega$-continuous. To use Theorem \ref{limit_exists_set}, $F$ has a final coalgebra. To use Theorem \ref{final_exists_continuous} the final coalgebra $(T, \gamma)$ is as follows: its state space $T$ is the limit of the chain:
\begin{center}
  $1 \xleftarrow{!} F(1) \xleftarrow{F(!)} F^{2}(1) \xleftarrow{F^{2}(!)} \ldots$
\end{center}
It consists of all finitely branching, ordered trees over {\small$\bigcup\limits_{0\leq n< \omega}AP^{n}$}, of which each branch is an infinite path. Its structure $\gamma : T \rightarrow F(T)$ sends a tree $t \in T$ with $(a_1, a_2, \ldots, a_n)$ as its root and $(t_1, t_2, \ldots, t_n)$ as its immediate subtrees to $((a_1, t_1), (a_2, t_2), \ldots, (a_n, t_n))$.

Then, we define $\pi: F \xrightarrow{.} \mathcal{P}_f$. We difine the family of functions $\pi_X : F(X)\rightarrow \mathcal{P}_f(X)$, for any set $X$, defined by :
\makeatletter
\let\@@@alph\@alph
\def\@alph#1{\ifcase#1\or \or $'$\or $''$\fi}\makeatother
\begin{subnumcases}
{\pi_X(x)=}
(\emptyset, \{x_{0},x_{1},\ldots,x_{n}\}), &if $x=(x_{0},x_{1},\ldots,x_{n}) \in X^n$,\nonumber\\
(\{a_{0},\ldots,a_{n}\},\{x_0, \ldots, x_n\}) &if $x=((a_{0},x_{0}), \ldots,(a_{n},x_{n})) \in (AP \times X)^n$.\nonumber
\end{subnumcases}
\makeatletter\let\@alph\@@@alph\makeatother
Notably, this defines a natural transformation $\pi : F  \xrightarrow{.} \mathcal{P}_f$, such that each $\pi_X$ is surjective. Finally, Theorem \ref{final_exists_pf(a)pf(x)} ensures the existence of a final coalgebra for functor $\mathcal{P}_f$.
\subsection{Concrete Final Coalgebra of the Functor $\mathcal{P}_f$}
Before constructing the concrete final coalgebra of the functor $\mathcal{P}_f$, we introduce some notations following \cite{Rutten1994InitialFinal}. Let $(S, \alpha)$ be a $\mathcal{P}_f$-coalgebra, $R$ be a $\mathcal{P}_f$-bisimulation, we denote for any $s \in S$, $[s]_{R}=\{s'\in S\ |\ (s,s')\in R\}$; $S_{R}=\{[s]_{R}\ |\ s \in S\}$; $\xi_{R}: S \rightarrow S_{R}$ is defined by $\xi_{R}(s)= [s]_{R}$; $\alpha_{R} : S_{R} \rightarrow  \mathcal{P}_f(S_{R})$ is defined by $\alpha_{R}([s]_{R})=(\pi_1(\alpha(s)), \{[s']_R\ |\ s' \in \pi_2(\alpha(s))\})$. In particular, $\xi_{\sim}$ is an homomorphism; $(S_{\sim},\alpha_{\sim})$ is strongly extensional; For any $\mathcal{P}_f$-coalgebra $(D,d)$, if $f , g : (D,d) \rightarrow (S_{\sim}, \alpha_{\sim})$ are two homomorphisms, then $f=g$.

Now, we prove actually the final coalgebra of $\mathcal{P}_f$ is $(T_{\thicksim}, (\pi_{T} \circ \gamma)_{\thicksim})$ written as $(P, \psi)$ for simplicity, where $(T, \gamma)$ is the final coalgebra of the functor $F$. It is sufficient to prove that :\begin{enumerate}\item $(P, \psi)$ is a $\mathcal{P}_f$-coalgebra.
  \item for any $\mathcal{P}_f$- coalgebra $(X, \alpha)$, there is a $\mathcal{P}_f$-homomorphism $h : X \rightarrow P$
  \item for any $\mathcal{P}_f$- coalgebra $(X, \alpha)$, if $f , g : X \rightarrow P$ be two $\mathcal{P}_f$-homomorphisms, then $f = g$.\end{enumerate}$$\xymatrix{X \ar[d]_{\rho \circ \alpha} \ar[r]^{f} & T \ar[d]_{D_1\ \ \ \ \ }^{\gamma} \ar[r]^{\xi_{\sim}} & T_{\sim} \ar[dd]_{D_3\ \ \ \ \ \ }^{(\pi_T \circ \gamma)_{\sim}} \\
  F(X)\ar[d]_{\pi_X} \ar[r]^{F(f)} & F(T) \ar[d]_{D_2\ \ \ \ \ }^{\pi_T} \\
  \mathcal{P}_f(X) \ar[r]_{\mathcal{P}_f(f)} & \mathcal{P}_f(T) \ar[r]_{\mathcal{P}_f(\xi_{\sim})} & \mathcal{P}_f(T_{\sim})}$$
 Here, we show the second point which is the only no trivial case. Let $(X, \alpha)$ be any $\mathcal{P}_f$- coalgebra, because $\pi_{X}$ is surjective, there exists a function $\rho :  \mathcal{P}_f(X) \rightarrow F(X)$ such that $\pi_{X} \circ \rho = Id_{\mathcal{P}_f(X)}$ and $(X, \rho \circ \alpha)$ is an $F$- coalgebra. According to the definition of final coalgebra, there exists an $F$-homorphism $f$ from $(X, \rho \circ \alpha)$ to $(T, \gamma)$ which is the final of the functor $F$, that means the square $D_1$ commutes. To use the definition of nature transformation, the square $D_2$ also commutes. So does the left big square. Also because $\xi_{\sim} : T \rightarrow T_{\sim}$ is a $\mathcal{P}_f$-homomorphism between $(T, \pi_{T} \circ \gamma)$ and $ (T_{\thicksim}(\pi_{T} \circ \gamma)_{\thicksim})$, that is the diagram $D_3$ commutes. Therefore, the whole square commutes. That means $\xi_{\sim} \circ f : X \rightarrow T_{\sim}$ is a $\mathcal{P}_f$-homomorphism from $(X, \alpha)$ to $(T_{\thicksim}, (\pi_{T} \circ \gamma)_{\thicksim})$.
\begin{definition}[The smallest Kripke structure]\label{final_Kripke_structure}
Let $(P, \psi)$ be the final coalgebra of functor $\mathcal{P}_f$, the Kripke structure corresponding to the coalgebraic Kripke structure $((P, \psi), P)$  is called the {\em the smallest Kripke structure} of the class Kripke structures over $AP$.
\end{definition}\begin{corollary}There is only one {\em smallest Kripke structure} over $AP$.\end{corollary}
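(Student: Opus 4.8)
The plan is to derive this corollary directly from two facts already established: the uniqueness (up to isomorphism) of final coalgebras in Theorem \ref{final_fixed_points}, part 1, and the class isomorphism $\mathcal{K} \cong \mathcal{C}$ of Proposition \ref{one-to-one-correspondence}. By Definition \ref{final_Kripke_structure} the smallest Kripke structure is the image under the correspondence $g : \mathcal{C} \rightarrow \mathcal{K}$ of the coalgebraic Kripke structure $((P, \psi), P)$, where $(P, \psi)$ is the final $\mathcal{P}_f$-coalgebra. Hence everything reduces to showing that this image is canonical, i.e.\ independent (up to isomorphism) of the particular choice of final coalgebra. First I would make precise that ``only one'' here means ``only one up to isomorphism'', since final objects are never literally unique but only determined by a canonical isomorphism.

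Accordingly, I would fix two final $\mathcal{P}_f$-coalgebras $(P, \psi)$ and $(P', \psi')$ together with their associated coalgebraic Kripke structures $((P, \psi), P)$ and $((P', \psi'), P')$. By Theorem \ref{final_fixed_points}(1) there is an isomorphism $\varphi : (P, \psi) \rightarrow (P', \psi')$ of $\mathcal{P}_f$-coalgebras, which is in particular a bijection on the underlying carriers. The key observation is that both distinguished sets are the entire state spaces, $I = P$ and $I' = P'$, so the bijection underlying $\varphi$ automatically maps $I$ onto $I'$; consequently $\varphi$ is an isomorphism of coalgebraic Kripke structures, not merely of the underlying coalgebras.

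Finally I would transport this isomorphism along Proposition \ref{one-to-one-correspondence}. Applying $g : \mathcal{C} \rightarrow \mathcal{K}$, which is a bijection on objects with inverse $f$, sends the isomorphic pair $((P, \psi), P)$ and $((P', \psi'), P')$ to isomorphic Kripke structures $g((P, \psi), P)$ and $g((P', \psi'), P')$. As these are by Definition \ref{final_Kripke_structure} precisely the smallest Kripke structure obtained from the two choices, the smallest Kripke structure is unique up to isomorphism, which is the claim.

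The main obstacle I anticipate is bookkeeping rather than conceptual depth: I must verify that $g$ carries isomorphisms to isomorphisms, which follows from its explicit description in the proof of Proposition \ref{one-to-one-correspondence}, where the states, transition relation and labelling are read off directly and functorially from the coalgebra structure $\gamma$. I must also confirm that the distinguished set $I = P$ is transported correctly, a step that is clean exactly because $I$ is the whole carrier, so no separate compatibility condition on initial states has to be checked.
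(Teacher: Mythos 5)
Your proposal is correct and follows essentially the same route as the paper: the paper's own proof is just the one-line citation ``From Theorem \ref{final_fixed_points} and the proof of Proposition \ref{one-to-one-correspondence}'', which are exactly the two ingredients you combine. You simply make explicit the details the paper leaves implicit --- that uniqueness means up-to-isomorphism, that the distinguished set being the whole carrier lets the coalgebra isomorphism transport to the coalgebraic Kripke structures, and that the bijection $g$ carries isomorphisms to isomorphisms.
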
\begin{proof}From Theorem \ref{final_fixed_points} and the proof of Proposition \ref{one-to-one-correspondence}.\end{proof}
\begin{definition}\label{homomorphism_Kripke}
  Let $K = (S, S_0, R, L)$, $K' = (S', S'_0, R', L')$ be two Kripke structures over $AP$, we called $h: S \rightarrow S'$ is an homomorphism from $K$ to $K'$, if for each $s \in S$ the following conditions hold: $s \in S_0 \Rightarrow h(s) \in S'_0$, $L(s) = L'(h(s))$, $h(R(s)) = R'(h(s))$.
\end{definition}
\begin{proposition}\label{Unique_homomorphism_final_kripke}
  Let $K_f = (S_f, S_{0_{f}}, R_f, L_f)$ be the smallest Kripke Structure over $AP$, for any Kripke structure $K = (S, S_0, R, L)$ over $AP$ there exists an unique homomorphism $f : S \rightarrow S_f$ from $K$ to $K_f$.
\end{proposition}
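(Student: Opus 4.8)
The plan is to reduce the statement to the universal property of the final coalgebra $(P, \psi)$ of $\mathcal{P}_f$ via the correspondence of Proposition \ref{one-to-one-correspondence}, and then to check by hand that a $\mathcal{P}_f$-coalgebra homomorphism into $(P,\psi)$ is exactly a Kripke homomorphism in the sense of Definition \ref{homomorphism_Kripke}.

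First I would pass from $K = (S, S_0, R, L)$ to its associated coalgebraic Kripke structure. By the construction in the proof of Proposition \ref{one-to-one-correspondence}, this yields a coalgebra $(S, \alpha)$ with $\alpha(s) = (L(s), R(s))$, where $R(s) = \{s' \mid (s,s') \in R\}$; under the standing finite-branching assumption this lands in the category of $\mathcal{P}_f$-coalgebras. Since $(P, \psi)$ is the final coalgebra of $\mathcal{P}_f$ (constructed in Section \ref{section_final}), Definition \ref{final_coalgebra} supplies a unique $\mathcal{P}_f$-coalgebra homomorphism $f : (S, \alpha) \to (P, \psi)$, which will be the candidate homomorphism.

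The heart of the argument is to verify that this $f$ is precisely a Kripke homomorphism from $K$ to $K_f$. By Definition \ref{coalgebra}, $\psi \circ f = \mathcal{P}_f(f) \circ \alpha$. Evaluating at a state $s$, and using the action of $\mathcal{P}_f$ on arrows together with $\alpha(s) = (L(s), R(s))$, the right-hand side equals $(L(s), f(R(s)))$. Reading off the structure of $K_f$ through the inverse translation $g$ of Proposition \ref{one-to-one-correspondence}, namely $L_f(f(s)) = \pi_1(\psi(f(s)))$ and $R_f(f(s)) = \pi_2(\psi(f(s)))$, the left-hand side equals $(L_f(f(s)), R_f(f(s)))$. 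Comparing components gives exactly $L(s) = L_f(f(s))$ and $f(R(s)) = R_f(f(s))$, which are the label- and transition-preservation conditions of Definition \ref{homomorphism_Kripke}. The remaining clause $s \in S_0 \Rightarrow f(s) \in S_{0_f}$ holds trivially, since by Definition \ref{final_Kripke_structure} the smallest Kripke structure arises from $((P, \psi), P)$, so its initial-state set is $S_{0_f} = P$ and the conclusion $f(s) \in P$ is automatic.

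For uniqueness I would run the same unpacking in reverse: any Kripke homomorphism $f' : K \to K_f$ satisfies the label- and transition-preservation equalities, and these reassemble into the single equation $\psi \circ f' = \mathcal{P}_f(f') \circ \alpha$, so $f'$ is a $\mathcal{P}_f$-coalgebra homomorphism $(S, \alpha) \to (P, \psi)$. The uniqueness clause of the final coalgebra then forces $f' = f$. The only step demanding care is the bookkeeping of this equivalence between the two notions of homomorphism together with the triviality of the initial-state condition; once that dictionary is fixed, both existence and uniqueness transfer directly from finality, and no genuinely hard step remains.
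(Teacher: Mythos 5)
Your proposal is correct and follows essentially the same route as the paper's own proof: translate $K$ into its coalgebra $(S,\alpha)$ via Proposition \ref{one-to-one-correspondence}, obtain the unique coalgebra homomorphism into the final coalgebra underlying $K_f$, unpack the commuting square componentwise to get the label and transition conditions of Definition \ref{homomorphism_Kripke} (with the initial-state clause trivial since $S_{0_f}$ is the whole state space), and prove uniqueness by reassembling any Kripke homomorphism into a coalgebra homomorphism and invoking finality. The only cosmetic difference is that you phrase the structure of $K_f$ through the inverse translation $g$ rather than directly through $\alpha_f$, which amounts to the same computation.
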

\begin{proof}
  We firstly show that there exists an homomorphism $f: S \rightarrow S_f$ from $K$ to $K_f$ (Existence). Then, show it is unique (Uniqueness).

  According to the proof of Proposition \ref{one-to-one-correspondence}, for Kripke structures $K$, $K_f$, there exist coalgebraic Kripke structures $((S, \alpha), S_0)$, $((S_f, \alpha_f), S_{0_f})$ respectively, where $\alpha : S \rightarrow \mathcal{P}_f(S)$ is defined by $s \mapsto (L(s),\{s' | (s,s') \in R\})$ and $\alpha_f : S_f \rightarrow \mathcal{P}_f(S_f)$ is defined by $s \mapsto (L_f(s),\{s'\ |\ (s,s') \in R_f\}$. To use Definition \ref{final_Kripke_structure}, $(S_f, \alpha_f)$ is the final coalgebra of the functor $\mathcal{P}_f$. According to Definition \ref{final_coalgebra}, there exists an unique homomorphism $h : S \rightarrow S_f$ from $(S, \alpha)$ to $(S_f, \alpha_f)$.
  \begin{itemize}
    \item Existence. We now try to prove that $h : S \rightarrow S_f$ is also an homomorphism from $K$ to $K_f$. It is sufficient to prove that for all $s \in S$, $s \in S_0 \Rightarrow h(s) \in S_{0_f}$; $L(s) = L(h(s))$; $h(R(s)) = R_f(h(s))$.
        \begin{enumerate}
          \item $s \in S_0 \Rightarrow h(s) \in S_f$, $S_f = S_{0_f}$ so $s \in S_0 \Rightarrow h(s) \in S_{0_f}$;
          \item $L(s) = \pi_1(\alpha (s)) = \pi_1(\mathcal{P}_f(h) \circ \alpha (s)) = \pi_1(\alpha_f \circ h(s)) = L(h(s))$;
          \item $h(R(s)) = h(\{s'\ |\ (s, s') \in R\}) = h(\{s'\ |\ s' \in \pi_{2}(\alpha(s))\}) = \pi_2(\mathcal{P}_f(h) \circ \alpha(s)) = \pi_2(\alpha_f \circ h(s)) = R_f(h(s))$.
        \end{enumerate}
    \item Uniqueness. It is sufficient to prove that if $h: S \rightarrow S_f$ is the homomorphism from $K$ to $K_f$, then it is also an homomorphism from $(S, \alpha)$ to $(S_f, \alpha_f)$. For every $s \in S$, $\mathcal{P}_f(h) \circ \alpha(s) = \mathcal{P}_f(h)((L(s), R(s))) = (L(s), h(R(s))) = (L(s), R_f(h(s))) = \alpha_f \circ h(s)$. So $h: S \rightarrow S_f$ is an homomorphism from $(S, \alpha)$ to $(S_f, \alpha_f)$.
  \end{itemize}
\end{proof}
\begin{theorem}
  Let $K = (S, S_0, R, L)$ be any Kripke structure over $AP$, $K_f = (S_f, S_{0_f}, R_f, L_f)$ be the smallest Kripke structure over $AP$, $f : S \rightarrow S_f$ be the unique homomorphism from $K$ to $K_f$. For all $s$, $s' \in S$ if $f(s) = f(s')$, then $s \sim_K s'$, where $\sim_k$ is the bisimularity over $K$.
\end{theorem}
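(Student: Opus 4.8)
The plan is to reduce the statement, which is phrased in terms of Kripke structures, to the purely coalgebraic Theorem~\ref{preserves_kernel}, and then translate back. All the ingredients are already in place: $\mathcal{P}_f$ weakly preserves kernels (Proposition~\ref{p_f_weakly_preserve_kernel}), the coalgebra $(S_f,\alpha_f)$ underlying the smallest Kripke structure is the final $\mathcal{P}_f$-coalgebra (Definition~\ref{final_Kripke_structure}), and bisimulation relations of Kripke structures coincide relationwise with $\mathcal{P}_f$-bisimulations of the associated coalgebras (Proposition~\ref{kripke_structure_coalgebra}).

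First I would pass from the Kripke-structure homomorphism $f$ to the coalgebraic picture. By the proof of Proposition~\ref{Unique_homomorphism_final_kripke}, $f: S \to S_f$ is not merely a homomorphism of Kripke structures but is exactly the unique $\mathcal{P}_f$-coalgebra homomorphism from $(S,\alpha)$ to $(S_f,\alpha_f)$, where $\alpha(s)=(L(s),\{s'\mid (s,s')\in R\})$. Since $(S_f,\alpha_f)$ is the final $\mathcal{P}_f$-coalgebra and $\mathcal{P}_f$ weakly preserves kernels, Theorem~\ref{preserves_kernel} applies with $(A,\alpha)=(S_f,\alpha_f)$ and $(B,\beta)=(S,\alpha)$. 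It yields, for all $s,s'\in S$, the equivalence $s\sim_S s' \iff f(s)=f(s')$, where $\sim_S$ denotes the $\mathcal{P}_f$-bisimilarity over $(S,\alpha)$. In particular the assumption $f(s)=f(s')$ gives $s\sim_S s'$.

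It remains to identify the coalgebraic bisimilarity $\sim_S$ with the Kripke bisimilarity $\sim_K$, i.e.\ the union of all bisimulation relations $B\subseteq S\times S$ over $K$ in the sense of Definition~\ref{bisimulation_relation}. This is where Proposition~\ref{kripke_structure_coalgebra} is used, taking $K'=K$: it shows that a relation $B\subseteq S\times S$ is a Kripke bisimulation relation on $K$ if and only if it is a $\mathcal{P}_f$-bisimulation on $(S,\alpha)$. Hence the two families of relations coincide, and since both bisimilarities are defined as the union of the respective families, $\sim_K=\sim_S$. Combining this with the previous step, $f(s)=f(s')$ implies $s\sim_S s'$, hence $s\sim_K s'$, as claimed.

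I expect the only delicate point to be this last identification $\sim_K=\sim_S$: Proposition~\ref{kripke_structure_coalgebra} is stated for bisimulation equivalence between two possibly different Kripke structures together with their initial states, so I would be careful to invoke its underlying relationwise correspondence (the part of its proof establishing that $B$ is a bisimulation relation iff $B$ is a bisimulation of coalgebras) rather than the equivalence-of-structures statement, applying it to the single structure $K$ with the initial-state clause playing no role. Everything else is a direct instantiation of results already established.
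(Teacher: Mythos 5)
Your proposal is correct and follows essentially the same route as the paper: the paper's own proof is just a citation of Theorem~\ref{preserves_kernel}, Proposition~\ref{p_f_weakly_preserve_kernel}, the proof of Proposition~\ref{kripke_structure_coalgebra}, and the proof of Proposition~\ref{Unique_homomorphism_final_kripke}, which are exactly the four ingredients you assemble, in the same roles. Your write-up additionally makes explicit the one subtle point (using the relationwise correspondence inside the proof of Proposition~\ref{kripke_structure_coalgebra} rather than its statement about initial-state-respecting equivalence), which the paper glosses over by citing ``the proof of'' that proposition.
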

\begin{proof}
  From Theorem \ref{preserves_kernel}, Proposition \ref{p_f_weakly_preserve_kernel}, the proof of Proposition \ref{kripke_structure_coalgebra}, the proof of Proposition \ref{Unique_homomorphism_final_kripke}.
\end{proof}
%-----------------------------------------------------------------------------------------------------------------------------------------------------------------------------------------------------------------------------------------------------------------
%-----------------------------------------------------------------------------------------------------------------------------------------------------------------------------------------------------------------------------------------------------------------
\section{The Concrete Smallest Kripke Structure}\label{section five}
In this section we will prove that for every Kripke structure $K$ there exists a concrete smallest Kripke structure $M$. This concrete smallest Kripke structure describes all behavior of $K$ and no redundancy, moreover $M \equiv K$.
\begin{proposition}\label{bisimulation_equivalent_ralation}
  $\equiv$, $\doteq$ are equivalent relations.
\end{proposition}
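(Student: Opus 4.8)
The plan is to check that each of $\equiv$ and $\doteq$ is reflexive, symmetric, and transitive, which is exactly what it means to be an equivalence relation. The key economy is that Proposition \ref{one-to-one-correspondence} gives a bijection $\mathcal{K} \cong \mathcal{C}$ and Proposition \ref{kripke_structure_coalgebra} shows this bijection carries $\equiv$ to $\doteq$; consequently $\doteq$ is, under the correspondence, precisely the pullback of $\equiv$, and reflexivity, symmetry, and transitivity transport along such a relation-preserving bijection. So I would first prove that $\equiv$ is an equivalence relation and then obtain the claim for $\doteq$ for free by invoking these two propositions. For $\equiv$ the work is to exhibit, in each of the three cases, a concrete bisimulation relation in the sense of Definition \ref{bisimulation_relation} together with the matching of initial states required by Definition \ref{bisimulation_equivalent}.

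For \textbf{reflexivity}, given $M$ with state set $S$ I would take the diagonal $\Delta_S = \{(s,s) \mid s \in S\}$. All three clauses of Definition \ref{bisimulation_relation} are immediate ($L(s)=L(s)$, and a successor is matched by itself), and every initial state is related to itself, so $M \equiv M$. For \textbf{symmetry}, if $B \subseteq S \times S'$ witnesses $M \equiv M'$ I would use the converse $B^{-1} = \{(s',s) \mid (s,s') \in B\}$: clause (1) is symmetric, while clauses (2) and (3) for $B^{-1}$ are exactly clauses (3) and (2) for $B$, and the two initial-state conditions simply swap roles, giving $M' \equiv M$.

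For \textbf{transitivity}, suppose $B \subseteq S \times S'$ witnesses $M \equiv M'$ and $B' \subseteq S' \times S''$ witnesses $M' \equiv M''$. I would take the relational composition $B' \circ B = \{(s,s'') \mid \exists s' \in S'.\ (s,s') \in B \text{ and } (s',s'') \in B'\}$. The label clause follows by chaining $L(s)=L'(s')=L''(s'')$. For the forward clause, a successor $s_1$ of $s$ is matched through $B$ by some $s'_1$ with $(s_1,s'_1) \in B$, which is in turn matched through $B'$ by some $s''_1$ with $(s'_1,s''_1) \in B'$, so that $(s_1,s''_1) \in B' \circ B$; the backward clause is symmetric. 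Chaining the two initial-state conditions then shows $B' \circ B$ witnesses $M \equiv M''$.

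The routine cases are reflexivity and symmetry; the only place that requires care is transitivity, where one must thread the intermediate witness $s'$ through both relations and keep the back-and-forth conditions pointing in the correct direction. Once $\equiv$ is established as an equivalence relation, the corresponding statement for $\doteq$ follows immediately from Propositions \ref{one-to-one-correspondence} and \ref{kripke_structure_coalgebra}.
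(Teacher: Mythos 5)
Your proof is correct, but there is nothing in the paper to compare it against: the paper's entire proof of this proposition is ``The proof is trivial. Here, we omit it.'' Your write-up supplies exactly the details the authors suppressed, and it does so soundly: the diagonal, the converse relation, and the relational composition witness reflexivity, symmetry, and transitivity of $\equiv$, with the label clause, the back-and-forth clauses, and the initial-state matching conditions chaining precisely as you describe. The genuinely valuable choice in your argument is handling $\doteq$ by transport along the bijection of Proposition \ref{one-to-one-correspondence} combined with Proposition \ref{kripke_structure_coalgebra}, rather than redoing the three constructions at the coalgebraic level. A direct proof of transitivity for $\doteq$ would require equipping the composite relation $B' \circ B$ with a $\mathcal{P}_f$-coalgebra structure making both projections homomorphisms (i.e., showing coalgebraic bisimulations compose), which holds for $\mathcal{P}_f$ but needs an extra argument of the kind used in Proposition \ref{p_f_weakly_preserve_kernel}; your transport sidesteps that entirely and makes the coalgebraic half of the claim an immediate corollary of results the paper has already proved. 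One minor caveat, which is the paper's fault rather than yours: Proposition \ref{one-to-one-correspondence} is stated for the functor $\mathscr{P}(AP)\times\mathscr{P}(\cdot)$ while Proposition \ref{kripke_structure_coalgebra} and the definition of $\doteq$ live at the level of $\mathcal{P}_f$, so strictly speaking the transport argument applies on the class of finitely branching structures where both propositions are in force.
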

\begin{proof}
  The proof is trivial. Here, we omit it .
\end{proof}
\begin{lemma}\label{bisimulation_equal_kernel}
   Let $(B, \beta)$ be any $\mathcal{P}_f$-coalgebra, $(A, \alpha)$ be the final coalgebra of $\mathcal{P}_f$. If $h : B \rightarrow A$ is the unique homomorphism from $(B, \beta)$ to $(A, \alpha)$, then $\sim_{B} = K_{h}$
\end{lemma}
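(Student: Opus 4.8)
The plan is to reduce the statement to Theorem~\ref{preserves_kernel}, which already establishes the pointwise biconditional for any functor that weakly preserves kernels, and then to translate that biconditional into an equality of relations. First I would invoke Proposition~\ref{p_f_weakly_preserve_kernel} to record that the functor $\mathcal{P}_f$ weakly preserves kernels, so that $\mathcal{P}_f$ satisfies the standing hypothesis of Theorem~\ref{preserves_kernel}. Since $(A,\alpha)$ is by assumption the final $\mathcal{P}_f$-coalgebra, $(B,\beta)$ is an arbitrary $\mathcal{P}_f$-coalgebra, and $h$ is the unique homomorphism from $(B,\beta)$ to $(A,\alpha)$ supplied by finality (Definition~\ref{final_coalgebra}), every hypothesis of Theorem~\ref{preserves_kernel} is in force with $F = \mathcal{P}_f$.

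Applying that theorem then yields directly that, for all $b, b' \in B$, we have $b \sim_B b'$ if and only if $h(b) = h(b')$. It remains only to rewrite this pointwise statement as an identity of subsets of $B \times B$. By Definition~\ref{kernel}, $K_h = \{(b,b') \in B \times B \mid h(b) = h(b')\}$, while by the definition of bisimilarity $\sim_B$ is the set of pairs $(b,b')$ with $b \sim_B b'$. The biconditional asserts precisely that a pair lies in $\sim_B$ exactly when it lies in $K_h$, so the two relations have the same members and hence $\sim_B = K_h$.

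Because the argument is a direct instantiation of results already proved in the excerpt, I do not anticipate a genuine obstacle. The only points requiring a moment's care are confirming that the hypotheses of Theorem~\ref{preserves_kernel} match the present situation exactly --- in particular that $h$ is the \emph{unique} homomorphism into the final coalgebra, which is guaranteed by finality --- and observing that the pointwise characterization of bisimilar elements coincides with set equality of the relations, which is immediate from unfolding Definition~\ref{kernel} and the definition of $\sim_B$. Thus the substantive content of the lemma is carried entirely by Proposition~\ref{p_f_weakly_preserve_kernel} together with Theorem~\ref{preserves_kernel}, and the present proof merely assembles them.
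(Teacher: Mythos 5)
Your proposal is correct and follows essentially the same route as the paper: the paper's own (very terse) proof is the chain of equivalences $(b,b') \in\ \sim_B$ iff $b \sim_B b'$ iff $h(b) = h(b')$ iff $(b,b') \in K_h$, whose middle step is exactly the application of Theorem~\ref{preserves_kernel} via Proposition~\ref{p_f_weakly_preserve_kernel} that you spell out. Your version merely makes explicit the hypotheses the paper leaves implicit, which is a faithful expansion rather than a different argument.
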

\begin{proof}
$(b, b') \in \sim_{B}$ iff $b \sim_{B} b'$ iff $h(b) = h(b')$ iff $(b, b') \in K_{h}$.
\end{proof}
\begin{definition}[The concrete smallest Coalgebraic Kripke structures]
   The coalgebraic Kripke structure $((C, \alpha), I)$ is
  \begin{enumerate}
    \item {\em reduced} if $\thicksim_{C} = \{(a,a) | a \in C\}$ (extensional)
    \item {\em connected} if every state s of C is reachable, i.e., there exists a path $s_{0},s_{1},\ldots,s_{n}$ s.t. $s_{0} \in I$, $s_{i+1} \in \pi_2(\alpha(s_{i}))$, $s_{n}=s$
    \item {\em concrete smallest} if it is connected and reduced.
  \end{enumerate}
\end{definition}
\begin{proposition} \label{connected_exists}
  For each coalgebraic Kripke structure $((C, \alpha), I)$, there exists a connected coalgebraic Kripke structure $((C', \alpha'), I')$ such that $((C', \alpha'), I')$ $\doteq$ $((C, \alpha), I)$
\end{proposition}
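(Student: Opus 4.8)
The plan is to take $C'$ to be exactly the reachable part of $C$ and to transport the structure by restriction. Concretely, I would set
$$C' = \{s \in C \mid \exists\, s_0, s_1, \ldots, s_n \text{ with } s_0 \in I,\ s_{i+1} \in \pi_2(\alpha(s_i)),\ s_n = s\},$$
take $I' = I$ (note $I \subseteq C'$ via the length-zero path), and define $\alpha' : C' \rightarrow \mathcal{P}_f(C')$ as the restriction $\alpha'(s) = \alpha(s)$. By construction $((C', \alpha'), I')$ is then connected, so the whole content of the proposition is that this restriction is a well-defined $\mathcal{P}_f$-coalgebra and that it is bisimulation equivalent to the original.

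First I would check that $\alpha'$ is well-defined, i.e. that $\alpha(s) \in \mathcal{P}_f(C')$ for every $s \in C'$. The only point to verify is the closure $\pi_2(\alpha(s)) \subseteq C'$: if $s$ is reachable by a path ending at $s$ and $s' \in \pi_2(\alpha(s))$, then appending $s'$ to that path witnesses reachability of $s'$, so $s' \in C'$. Since $\pi_1(\alpha(s)) \subseteq AP$ is finite and $\pi_2(\alpha(s)) \subseteq C'$ is finite, $\alpha(s) = \alpha'(s)$ indeed lies in $\mathcal{P}_f(C')$. Note that $C'$ may well be infinite; only the branching is finite, which is precisely what $\mathcal{P}_f$ permits, so the possible infinitude of $C'$ causes no difficulty.

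To establish $((C', \alpha'), I') \doteq ((C, \alpha), I)$ I would exhibit the diagonal $B = \{(s, s) \mid s \in C'\} \subseteq C' \times C$ as a bisimulation between $(C', \alpha')$ and $(C, \alpha)$. Following Definition \ref{bisimulation_coalgebra}, I would extend $B$ to $(B, \gamma)$ by
$$\gamma(s, s) = \big(\pi_1(\alpha(s)),\, \{(s', s') \mid s' \in \pi_2(\alpha(s))\}\big),$$
which is legitimate because $\pi_2(\alpha(s)) \subseteq C'$, so every pair $(s', s')$ occurring lies in $B$, and the set is finite. A direct computation then shows $\alpha' \circ \pi_1 = \mathcal{P}_f(\pi_1) \circ \gamma$ and $\alpha \circ \pi_2 = \mathcal{P}_f(\pi_2) \circ \gamma$, i.e. both projections are $\mathcal{P}_f$-homomorphisms, so $B$ is a bisimulation. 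Finally, for the initial-state condition of Definition \ref{coalgebraic_bisimulation_equ}, each $s \in I' = I$ is matched with itself via $(s, s) \in B$, and symmetrically for $s \in I$; both directions hold trivially since $I = I' \subseteq C'$.

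The only genuinely load-bearing step is the closure property $\pi_2(\alpha(s)) \subseteq C'$, which is what makes the restriction $\alpha'$ a bona fide $\mathcal{P}_f$-coalgebra rather than a mere partial map. Everything after that—the diagonal bisimulation and the matching of initial states—is a routine unwinding of the definitions.
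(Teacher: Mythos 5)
Your proposal is correct and takes essentially the same route as the paper: both restrict $C$ to its reachable part, take $\alpha'$ to be the restriction $\alpha \upharpoonright C'$, and keep the (automatically reachable) initial states. The paper simply asserts the bisimulation equivalence with a ``Notably,'' whereas you supply the two details it leaves implicit --- the closure of $C'$ under successors, which makes the restriction a genuine $\mathcal{P}_f$-coalgebra, and the diagonal relation as the witnessing bisimulation.
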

\begin{proof}
  We set $C'$ be $\{s \in C\ |\ s\ is\ reachable\}$, $\alpha'$ be $\alpha \upharpoonright C'$ and $I'$ be $\{s \in C'\ |\ s \in I\}$. Notably, $((C', \alpha'), I') \doteq ((C, \alpha), I)$
\end{proof}
\begin{definition}[The concrete smallest Kripke structures]
  A Kripke structure is called the concrete samllest Kripke structure if its coalgebraic Kripke structure is the concrete smallest Coalgebraic Kripke structure.
\end{definition}
\begin{lemma} \label{homomorphism_minimal}
Let $((C, \alpha), I)$ be a connected Coalgebraic Kripke structure, $(P, \psi)$ be the final coalgebra of functor $\mathcal{P}_f$ and $h$ be the unique homomorphism from $(C, \alpha)$ to $(P, \psi)$.
\begin{enumerate}
  \item $((C_{\thicksim}, \alpha_{\thicksim}), I_{\thicksim})$ is concrete smallest, where $I_{\thicksim} = \{A | A=\{b \in I| a \thicksim b, a \in I\} \}$
  \item $((C_{K_{h}}, \alpha_{K_{h}}), I_{K_{h}})$ is concrete smallest, where $I_{K_{h}} = \{A | A=\{b \in I| (a, b) \in K_{h}, a \in I\} \}$
  \item $((h(C), \psi \upharpoonright h(C)), h(I))$ is concrete smallest.
  \end{enumerate}
\end{lemma}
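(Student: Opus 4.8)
The three claims are not independent, and the plan is to reduce all of them to one calculation plus two routine structure-preservation arguments.

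First I would record the key identification. Because $\mathcal{P}_f$ weakly preserves kernels (Proposition \ref{p_f_weakly_preserve_kernel}) and $(P,\psi)$ is final, Theorem \ref{preserves_kernel} --- equivalently Lemma \ref{bisimulation_equal_kernel} --- gives $\sim_C = K_h$. Consequently the two quotients coincide on the nose: $[s]_\sim = [s]_{K_h}$ for every $s \in C$, so $C_\sim = C_{K_h}$, $\alpha_\sim = \alpha_{K_h}$ and $I_\sim = I_{K_h}$. Thus (2) is literally the same statement as (1), and it suffices to prove (1) and (3).

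For (1) I would verify \emph{reduced} and \emph{connected} separately. Reduced is immediate: the notations preceding the construction record that $(C_\sim,\alpha_\sim)$ is strongly extensional, so $a \sim_{C_\sim} a'$ forces $a = a'$; together with reflexivity this says $\sim_{C_\sim} = \{(a,a)\mid a\in C_\sim\}$. For connected, I would use that $\xi_\sim : C \to C_\sim$ is a surjective $\mathcal{P}_f$-homomorphism (again from the notations). Being a homomorphism means $\alpha_\sim([s]_\sim) = \mathcal{P}_f(\xi_\sim)(\alpha(s)) = (\pi_1(\alpha(s)), \{[s']_\sim \mid s' \in \pi_2(\alpha(s))\})$, so any witnessing path $s_0,\dots,s_n$ in $C$ with $s_0 \in I$ and $s_{i+1}\in\pi_2(\alpha(s_i))$ maps to a path $[s_0]_\sim,\dots,[s_n]_\sim$ in $C_\sim$ with $[s_0]_\sim \in I_\sim$ and $[s_{i+1}]_\sim \in \pi_2(\alpha_\sim([s_i]_\sim))$. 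Since $\xi_\sim$ is onto and $C$ is connected, every class $[s]_\sim$ is reached this way, so $C_\sim$ is connected.

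For (3) I would first check that $h(C)$ really carries the restricted coalgebra structure, i.e. $h(C)$ is a subcoalgebra of $(P,\psi)$: since $h$ is a homomorphism, $\psi(h(s)) = \mathcal{P}_f(h)(\alpha(s)) = (\pi_1(\alpha(s)), \{h(s')\mid s'\in\pi_2(\alpha(s))\})$, whose successors all lie in $h(C)$, so $\psi \upharpoonright h(C)$ is well defined. Then I would exhibit the standard factorization $h = \iota \circ \bar h \circ \xi_{K_h}$ through its kernel quotient $C_{K_h}$, image $h(C)$ and inclusion $\iota$; the map $\bar h : C_{K_h} \to h(C)$, $\bar h([s]_{K_h}) = h(s)$, is a bijection, and the displayed formulas show it intertwines $\alpha_{K_h}$ with $\psi\upharpoonright h(C)$, hence is a coalgebra isomorphism. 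Being connected and reduced are invariant under coalgebra isomorphism, so (3) follows from (2). If one prefers a direct argument: connectedness transfers along the homomorphism $h$ exactly as in (1) (now lifting paths through $h$, with $h(I)$ as the initial set), and reducedness holds because any bisimulation $R \subseteq h(C)\times h(C)$ on the subcoalgebra is still a bisimulation on the ambient final coalgebra $(P,\psi)$, which is strongly extensional by Theorem \ref{extensional}, forcing $\sim_{h(C)}$ to be the identity.

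The main obstacle is precisely the subcoalgebra step in (3): one must confirm that the transition structure of $P$ closes up inside the image $h(C)$ (so that $\psi\upharpoonright h(C)$ is a genuine $\mathcal{P}_f$-coalgebra) and that strong extensionality descends from $(P,\psi)$ to this subcoalgebra. Both hinge on the fact that, because $h$ is a homomorphism, successors of points of $h(C)$ stay in $h(C)$ and bisimulations on $h(C)$ re-embed into $(P,\psi)$; everything else --- reducedness of the quotient, path-lifting for connectedness --- is essentially bookkeeping built on the single identity $\sim_C = K_h$.
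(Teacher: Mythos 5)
Your proposal is correct and follows essentially the same route as the paper's own proof: the identity $\sim_C = K_h$ (Lemma \ref{bisimulation_equal_kernel}) collapses item (2) into item (1), strong extensionality of the quotient gives reducedness, and the isomorphism $(h(C), \psi \upharpoonright h(C)) \cong (C_{K_h}, \alpha_{K_h})$ reduces item (3) to item (2). The only difference is that you supply the routine verifications the paper omits or calls trivial --- connectedness of the quotient via path-lifting along $\xi_\sim$, well-definedness of $\psi \upharpoonright h(C)$ as a subcoalgebra, and the explicit isomorphism $\bar h([s]_{K_h}) = h(s)$.
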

\begin{proof}
\begin{enumerate}
  \item $(C_{\thicksim}, \alpha_{\thicksim})$ is extensional.
  \item $K_{h}= \thicksim$  (lemma \ref{bisimulation_equal_kernel})
  \item The proof of $(h(C), \psi \upharpoonleft h(C)) \cong (C_{K_{h}}, \alpha_{K_{h}})$ is trivial, we omit it here.
\end{enumerate}
\end{proof}
\begin{proposition}\label{minimal_iosmorphism}
  If $((M_{1}, \alpha), I_{1})$ and $((M_{2}, \beta), I_{2})$ are two concrete smallest coalgebraic Kripke structures and $((M_{1}, \alpha), I_{1}) \doteq ((M_{2}, \beta), I_{2})$, then $M_{1} \cong M_{2}$
\end{proposition}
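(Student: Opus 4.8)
The plan is to route everything through the final coalgebra $(P,\psi)$ of $\mathcal{P}_f$ and to show that the two structures have the same image there. Let $h_1 : M_1 \to P$ and $h_2 : M_2 \to P$ be the unique homomorphisms guaranteed by finality (Definition \ref{final_coalgebra}). Since both coalgebras are reduced, i.e. $\sim_{M_i}$ is the diagonal, Lemma \ref{bisimulation_equal_kernel} (together with Theorem \ref{preserves_kernel} and Proposition \ref{p_f_weakly_preserve_kernel}) gives $K_{h_i} = {\sim_{M_i}} = \{(a,a)\}$, so each $h_i$ is injective. Hence $h_i$ restricts to a coalgebra isomorphism $M_i \cong h_i(M_i)$, where $h_i(M_i)$ carries the structure $\psi \upharpoonright h_i(M_i)$ as in Lemma \ref{homomorphism_minimal}(3). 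It therefore suffices to prove $h_1(M_1) = h_2(M_2)$ as sub-coalgebras of $(P,\psi)$.

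First I would exploit the hypothesis $\doteq$. By Definition \ref{coalgebraic_bisimulation_equ} there is a bisimulation $B \subseteq M_1 \times M_2$ whose initial states match both ways. Viewing $B$ as a bisimulation on the disjoint union $M_1 + M_2$ and composing with the unique homomorphism $M_1 + M_2 \to P$ (which restricts to $h_1$ on $M_1$ and to $h_2$ on $M_2$), any pair $(a,b) \in B$ satisfies $a \sim b$ in $M_1 + M_2$, and finality forces $h_1(a) = h_2(b)$. In particular, matching the initial states both ways yields $h_1(I_1) = h_2(I_2)$.

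Next I would use connectedness to pin down the images. Because each $h_i$ is a coalgebra homomorphism it commutes with the structure maps, so $h_i(M_i)$ is closed under $\psi$-successors and contains $h_i(I_i)$; and since every state of $M_i$ is reachable from $I_i$, every element of $h_i(M_i)$ is reachable in $P$ from $h_i(I_i)$. Thus $h_i(M_i)$ is exactly the reachable part of $(P,\psi)$ generated by $h_i(I_i)$. As the two generating sets coincide by the previous step, $h_1(M_1) = h_2(M_2)$, and chaining the isomorphisms gives $M_1 \cong h_1(M_1) = h_2(M_2) \cong M_2$.

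The main obstacle I anticipate is the reachability argument of the last paragraph: one must verify carefully, from the homomorphism condition alone, that the image of a connected coalgebra is precisely the subcoalgebra of $P$ generated by the image of the initial states (both closure under successors and reachability), and that equal generators produce equal generated subcoalgebras inside $P$. Establishing $h_1(a)=h_2(b)$ for $(a,b)\in B$ via the disjoint-union bisimulation is the other point needing care, though it follows routinely from finality.
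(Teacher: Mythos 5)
Your proof is correct, and it uses the same three ingredients as the paper (finality, the kernel theorem via Lemma \ref{bisimulation_equal_kernel}, reducedness, connectedness), but packaged differently. The paper works directly with the bisimulation $R$ witnessing $\doteq$: it shows $R$ is total on both sides (from connectedness, since $I_1 \subseteq \pi_1(R)$ and domains of bisimulations are closed under successors) and single-valued in both directions (finality gives $g \circ \pi_1 = f \circ \pi_2$ for the unique maps into the final coalgebra, then Theorem \ref{preserves_kernel} plus reducedness turns $f(s_1)=f(s_2)$ into $s_1 = s_2$), so $R$ itself is the isomorphism. You instead embed both structures into $(P,\psi)$: reducedness makes each $h_i$ injective, finality makes $h_1$ and $h_2$ agree on $B$-related pairs, and connectedness identifies each image with the subcoalgebra of $P$ reachable from $h_i(I_i)$, so the images coincide and the isomorphism is $h_2^{-1} \circ h_1$ --- which is in fact the same map the paper's $R$ encodes. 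Your route buys a cleaner treatment of the step the paper glosses over (that connectedness really does force coverage: you spell out that images of homomorphisms are successor-closed and that reachability transfers along homomorphisms), at the cost of the extra bookkeeping about subcoalgebras of $P$. Two small simplifications: the disjoint-union detour is unnecessary --- $h_1 \circ \pi_1$ and $h_2 \circ \pi_2$ are both homomorphisms from $(B,\gamma)$ to $(P,\psi)$, so finality directly gives $h_1(a) = h_2(b)$ for $(a,b) \in B$, exactly the paper's move --- and that same observation discharges the second ``point needing care'' you flag at the end, so only the reachability verification genuinely needs writing out.
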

\begin{proof}
 It is sufficient to prove that there exists a relation $B \subseteq M_{1} \times M_{2}$ which is a total one-to-one relation. That means for every $s \in M_{1}$, there exists an unique $s' \in M_{2}$ such that $(s, s') \in B$ and for every $s' \in M_{2}$, there exists an unique $s \in M_{1}$ such that $(s, s') \in B$.

 Since $((M_{1}, \alpha), I_{1}) \doteq ((M_{2}, \beta), I_{2})$, there is a bisimulation relation $R$ between $M_1$ and $M_2$. We will prove $R$ is a total one-to-one relation.

 According to definition \ref{coalgebraic_bisimulation_equ}, $I_{1} \subseteq \pi_1(R)$. Because $((M_{1}, \alpha), I_{1})$ is concrete smallest, $((M_{1}, \alpha), I_{1})$ is connected, so $M_{1} \subseteq \pi_1(R)$. Similarly, $M_{2} \subseteq \pi_2(R)$. Simultaneously, $R$ can make the following diagram commute:
\[
\begin{CD}
P @<g<< M_{1} @<{\pi_{1}}<< R @>{\pi_{2}}>> M_{2} @>f>> P\\
@V{\psi}VV @V{\alpha}VV @V{\gamma}VV @VV{\beta}V @VV{\psi}V\\
\mathcal{P}_f(P) @<<\mathcal{P}_f(g)< \mathcal{P}_f(M_{1}) @<<{\mathcal{P}_f(\pi_{1})}< \mathcal{P}_f(R) @>>{\mathcal{P}_f(\pi_{2})}> \mathcal{P}_f(M_{2}) @>>\mathcal{P}_f(f)> \mathcal{P}_f(P)
\end{CD}
\]
where $(P, \psi)$ is final coalgebra of $\mathcal{P}_f(\cdot)$.

$g \circ \pi_{1}$ and $f \circ \pi_{2}$ are both the homomorphism from $(R, \gamma)$ to $(P, \psi)$, so $g \circ \pi_{1} = f \circ \pi_{2}$ (Definition \ref{final_coalgebra})

Suppose there exist $s \in M_{1}$, $s_{1} \in M_{2}$, $s_{2} \in M_{2}$ s.t. $(s, s_{1})$, $(s, s_{2})$ $\in$ $R$ and $s_{1} \neq s_{2}$.
\[
\begin{split}
g(s) &= g \circ \pi_{1}(s, s_{1}) \quad (definition\ of\ \pi_{1})\\
&= f \circ \pi_{2}(s, s_{1}) \quad (g \circ \pi_{1} = f \circ \pi_{2})\\
&= f(s_{1})\quad (definition\ of\ \pi_{2})\\
\end{split}
\]
Similarly, $g(s) = f(s_{2})$, so $f(s_{2}) = f(s_{1})$. According to Theorem \ref{preserves_kernel}, $s_{1} \sim_{M_{2}} s_{2}$, but $s_{1} \neq s_{2}$ as suppose, so $M_{2}$ is not reduced. This contradicts that $M_{2}$ is concrete smallest. So for every $s \in M_{1}$ there is unique $s' \in M_{2}$ s.t. $(s, s') \in R$.
Analogously, for every $s \in M_{2}$ there is unique $s' \in M_{1}$ s.t. $(s', s) \in R$, that is $R$ is a total one-to-one relation. Obviously, there is an isomorphism between $M_{1}, M_{2}$.
\end{proof}
\begin{theorem}
  For every Kripke structure $K$, there exists an unique concrete smallest Kripke structure $M$, such that $M \equiv K$.
\end{theorem}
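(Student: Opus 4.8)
The plan is to transport the entire problem across the isomorphism $\mathcal{K} \cong \mathcal{C}$ of Proposition~\ref{one-to-one-correspondence}, settle existence and uniqueness on the coalgebraic side where all the machinery lives, and then translate back. First I would fix $((C,\alpha), I)$ to be the coalgebraic Kripke structure corresponding to $K$. The statement has two halves, \emph{existence} of a concrete smallest $M$ with $M \equiv K$, and \emph{uniqueness}; I read "unique" as "unique up to isomorphism", since concreteness can only pin $M$ down to iso.

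For existence I would proceed in a chain. Apply Proposition~\ref{connected_exists} to replace $((C,\alpha),I)$ by a connected $((C',\alpha'),I') \doteq ((C,\alpha),I)$. Then Lemma~\ref{homomorphism_minimal}(1) gives that the bisimilarity quotient $((C'_\sim, \alpha'_\sim), I'_\sim)$ is concrete smallest. The intermediate fact I must supply is that this quotient is itself bisimulation equivalent to $((C',\alpha'),I')$: this holds because $\xi_\sim$ is a homomorphism, and the graph $\{(s,\xi_\sim(s)) \mid s \in C'\}$ of any homomorphism is a bisimulation; the definition of $I'_\sim$ makes the initial states match both ways, so the conditions of Definition~\ref{coalgebraic_bisimulation_equ} are met and $((C',\alpha'),I') \doteq ((C'_\sim,\alpha'_\sim),I'_\sim)$. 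Chaining these two $\doteq$-steps by transitivity (Proposition~\ref{bisimulation_equivalent_ralation}) yields $((C'_\sim,\alpha'_\sim),I'_\sim) \doteq ((C,\alpha),I)$. Translating back along Proposition~\ref{one-to-one-correspondence} produces a concrete smallest Kripke structure $M$, and Proposition~\ref{kripke_structure_coalgebra} converts the $\doteq$ into $M \equiv K$.

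For uniqueness, suppose $M_1$ and $M_2$ are both concrete smallest with $M_1 \equiv K$ and $M_2 \equiv K$. By symmetry and transitivity of $\equiv$ (Proposition~\ref{bisimulation_equivalent_ralation}) I get $M_1 \equiv M_2$, which Proposition~\ref{kripke_structure_coalgebra} turns into $\doteq$ on the corresponding coalgebraic structures; these are concrete smallest by the definition of a concrete smallest Kripke structure, so Proposition~\ref{minimal_iosmorphism} delivers $M_1 \cong M_2$. The main obstacle is the one gap not handed to me by a cited result: verifying that the bisimilarity quotient is genuinely bisimulation equivalent to the structure it quotients, i.e.\ that $\xi_\sim$ induces a bisimulation that correctly pairs up initial states. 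Everything else is assembling the cited propositions in the right order, together with the standing convention that "unique" means up to isomorphism.
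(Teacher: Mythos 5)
Your proposal is correct and takes essentially the same route as the paper, whose entire proof is the bare citation list ``Proposition~\ref{kripke_structure_coalgebra}, Proposition~\ref{connected_exists}, Proposition~\ref{minimal_iosmorphism}'' (notably omitting even Lemma~\ref{homomorphism_minimal}, which supplies the reduction step you invoke). The only substantive addition on your side is that you explicitly verify the gap the paper leaves silent --- that the bisimilarity quotient is $\doteq$ to the structure it quotients, via the graph of the homomorphism $\xi_\sim$ being a bisimulation that pairs initial states both ways --- which is exactly the right way to close that gap.
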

\begin{proof}
  From Proposition \ref{kripke_structure_coalgebra}, Proposition \ref{connected_exists}, Proposition \ref{minimal_iosmorphism}.
\end{proof}
%-----------------------------------------------------------------------------------------------------------------------------------------------------------------------------------------------------------------------------------------------------------------
%-----------------------------------------------------------------------------------------------------------------------------------------------------------------------------------------------------------------------------------------------------------------
\section{Examples}\label{Examples}
In this section we will give two examples to find the concrete smallest Kripke structures. The first is finding the concrete smallest Kripke structure for a finite Kripke structure, while the second is for an infinite one. The following method is based on Lemma \ref{homomorphism_minimal}, so in both case the original Kripke structure should be connected.

The basic idea is extracting the image of the original Kripke structure under $h$, where $h$ is the unique homomorphism from the original Kripke structure to the smallest Kripke structure. The image of a state $s \in S$ is an infinite tree obtained by unwinding $K$ from $s$.
\begin{example}
Consider the following Kripke structure.
\begin{center}
  \includegraphics[width=0.3\textwidth]{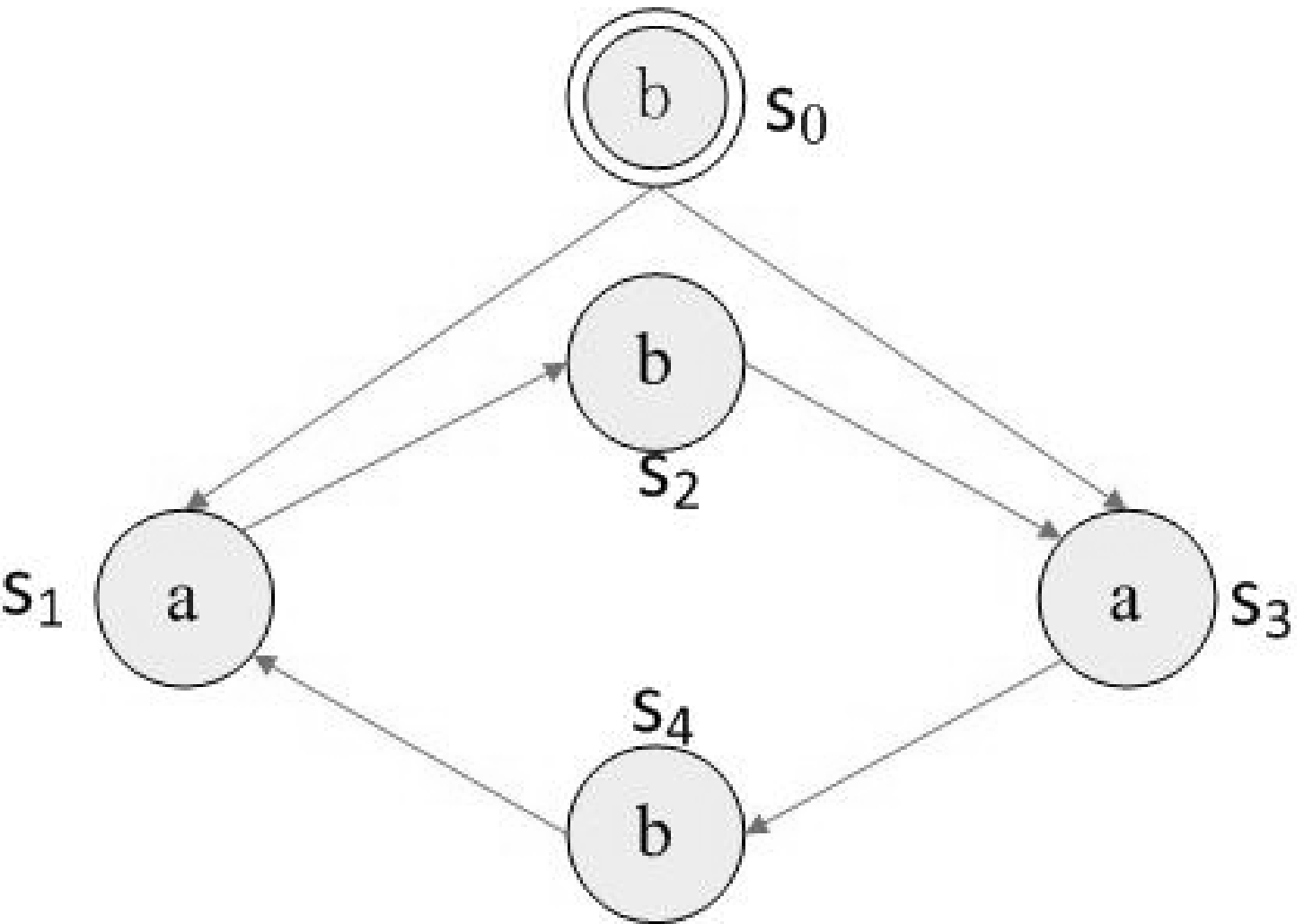}
\end{center}
In first, unwinding this Kripke structure from $s_{0}$ we get the following tree.
\begin{center}
  \includegraphics[width=0.3\textwidth]{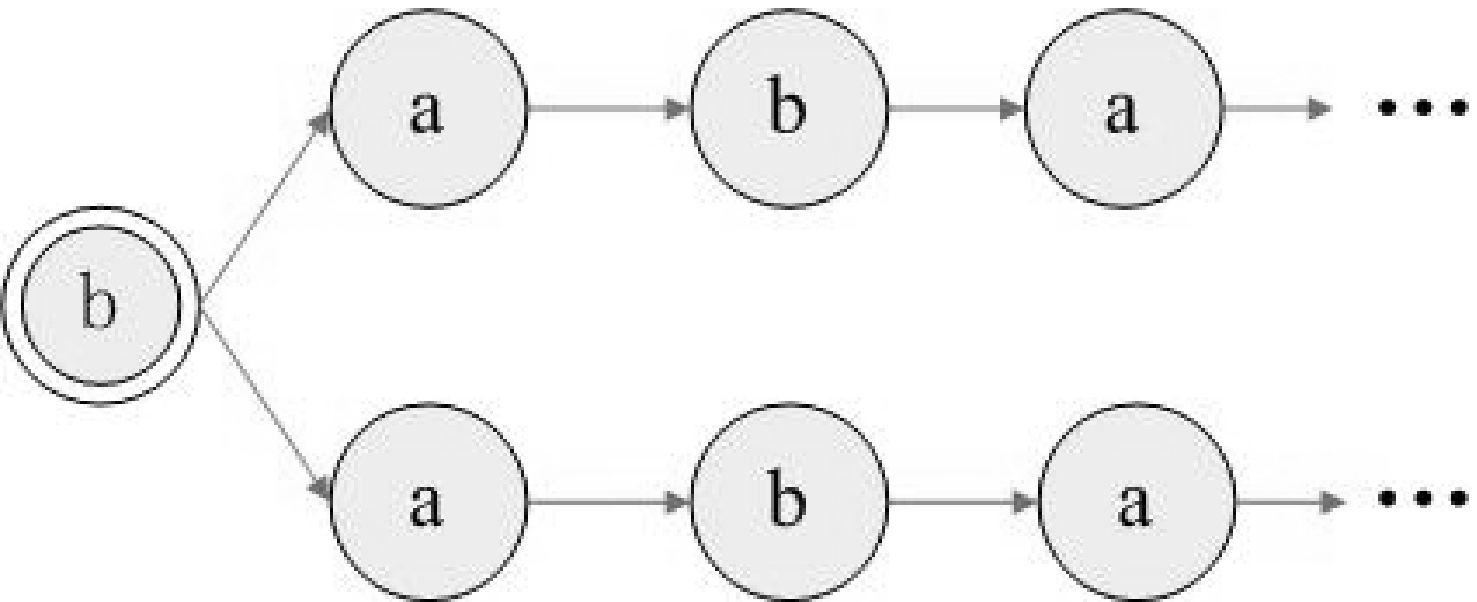}
\end{center}
But its two immediate subtrees are bisimular. So we cutting one off and getting $h(s_0)$:
\begin{center}
  \includegraphics[width=0.3\textwidth]{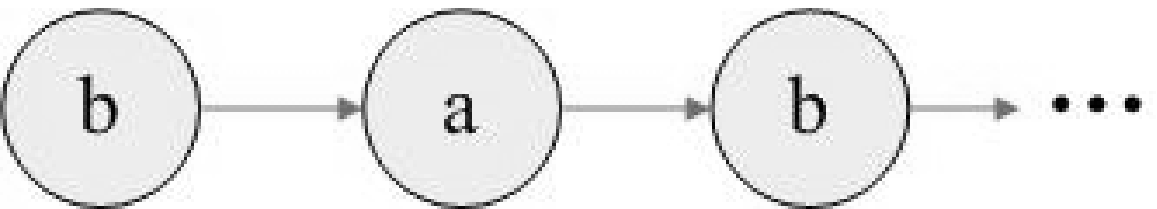}
\end{center}
Then, unwind $s_{1}, s_{3}$. We get the following tree $(h(s_1) = h(s_3))$:
\begin{center}
  \includegraphics[width=0.3\textwidth]{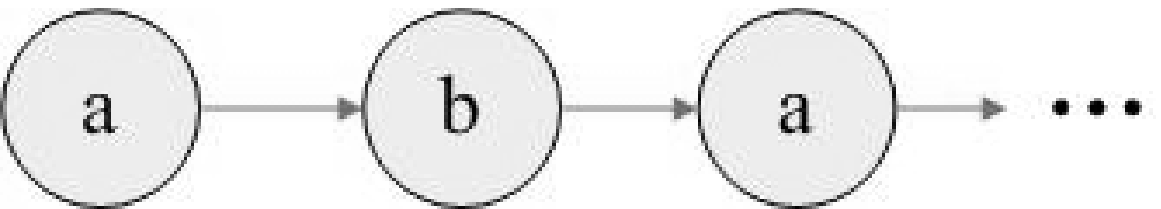}
\end{center}
Finally, unwind $s_{2}, s_{4}$, we get the same tree as unwinding $s_{0}$, that is $h(s_0) = h(s_2) = h(s_4)$.
Now, let us set a new Kripk structure:
\begin{center}
  \includegraphics[width=0.4\textwidth]{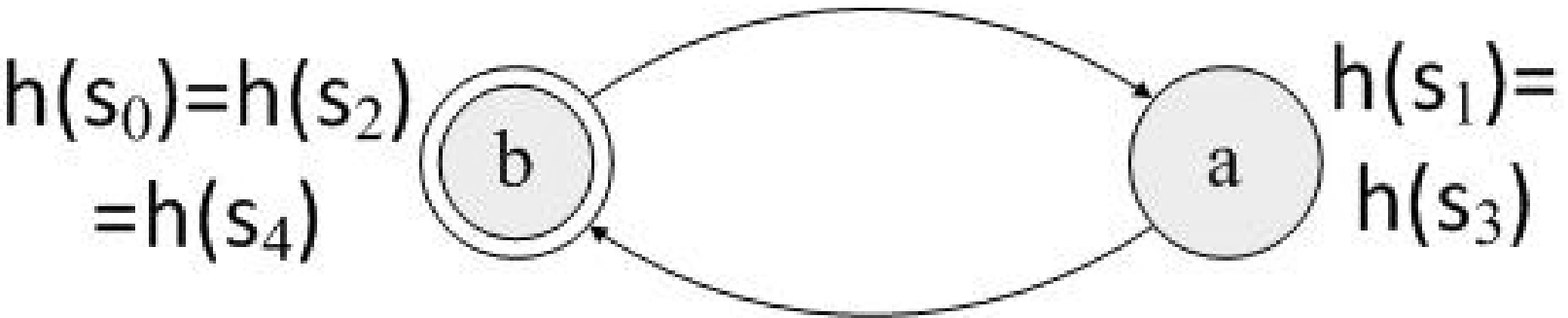}
\end{center}
which is concrete smallest.
\end{example}
\begin{example}
In this case, the Kripke structure is $(\omega, I, R, L)$ over $AP=\{p_0, p_1, p_2,$ $ p_3, p_4\}$, where $\omega$ is the set of natural numbers, $I=\{0\}$, $R = \{(n, s(n))\ |\ s$ is the successor function$\}$, $L(n) = \{ p_i\ |\ i = n\ Mod\ 5 \}$.
In this case, we only need to unwind the Kripke structure for there is only one immediate subtree. So we get the concrete smallest Kripke structure $(S', R', I', L')$, where $S'=\{s_0, s_1, s_2, s_3, s_4\}$, $R'= \{(s_0, s_1),$ $(s_1, s_2), (s_2, s_3), (s_3, s_4), (s_4, s_0)\}$, $I'= \{s_0\}$, $L(s_i)=\{p_i\}$.
\end{example}
This method also can be applied to some Rational Kripke models \cite{Bekker2009Symbolic}. But some Rational Kripke models are infinite branching, so this method does not work well.

\section{Algorithms}\label{Algorithms}
Here are two algorithms finding the concrete smallest Kripke structures for finite Kripke structures and for infinite Kripke Structures defined by Simple Graph Grammars \cite{QuemenerJ95} with some restrictions.

\textbf{The first algorithm} is based on coinduction \cite{Jacobs1997atutorial, Davidebisimulation}.

Given a Kripke structure $M = (S,S_{0},R,L)$
\begin{enumerate}
  \item abandon all not reachable states from S.
  \item $\prod = \{G \subseteq S | s_{1},s_{2} \in G, L(s_{1}) = L(s_{2})\}$;
  \item for $G \in \prod$ do\\
  split $G$ into $G_{1},G_{2},\ldots,G_{n}$. $g_{1},g_{2}$ is in the same subset $G_{i}$ if $\forall G \in \prod R(g_{1})\cap G=\phi$ iff $R(g_{2})\cap G=\phi$. Add $G_{1},G_{2},\ldots,G_{n}$ into $\prod$ and abandon $G$ getting $\prod_{new}$
  \item if $\prod_{new} = \prod$, then let $\prod_{final} = \prod$, otherwise let $\prod = \prod_{new}$ and do 3 again.
  \item We get the concrete smallest Kripke structure $M' = (S',S'_{0},R',L')$
  \begin{enumerate}
    \item $S' = \prod_{final}$
    \item $S'_{0} = \{I \in \prod_{final} | I \cap S_{0} \neq \phi \}$
    \item $R' = \{(G_{1},G_{2}) \in \prod_{final} \times \prod_{final} |\exists g_{1} \in G_{1}, g_{2} \in G_{2}\ and\ (g_{1},g_{2}) \in R\}$
    \item $L'(G_{1}) = L(g_{1})$ where $g_{1} \in G_{1}$
  \end{enumerate}
\end{enumerate}
It is notable that every two elements of  S, $s_{0},s_{1}$ are in one element of $\prod_{final}$ iff $s_{0} \sim s_{1}$, so $M'$ is the concrete smallest Kripke structure of $M$.

\textbf{The second algorithm:} we first transform infinite Kripke Structures defined by Simple Graph Grammars into finite Kripke Structures, then use the first algorithm to act on the finite Kripke Structures.

Let $G_0 =(S_{G_0},R_{G_0},L_{G_0})$ be a finite Kripke structure with N distinguished pairwise distinct states $(ex_i)_1^N$, and $A= (S_A, R_A, L_A)$ a finite Kripke structure with 2N distinguished pairwise distinct states $(in_i)_1^N$ and $(out_i)_1^N$. We further impose that: $\forall i \in [1,N], L_{G_0}(ex_i) = L_A(in_i) =L_A(out_i)$ \cite{QuemenerJ95}.

Let $\overline{G_0} = (G_0, H_{G_0}=\{V ex_1, \ldots, ex_N\})$, and $\overline{A} = (A, H_{A}=\{V out_1, \ldots, out_N\})$ and let $\mathscr{G}_0$ be the structure grammar with the unique rule V in $V in_1, \ldots, in_N \rightarrow \overline{A}$ \cite{QuemenerJ95}.

There should be some restrictions to $(ex_i)_1^N$ and $(out_i)_1^N$. $S_{G_0}/ (ex_i)_1^N$ and $S_A/(out_i)_1^N$ should not be reached by $(ex_i)_1^N$ and $(out_i)_1^N$ respectively. Because the concrete smallest Kripke structure may be infinite if we abandon these restrictions. For example, the concrete smallest of the following Kripke structure is infinite:
\begin{center}
  \includegraphics[width=0.45\textwidth]{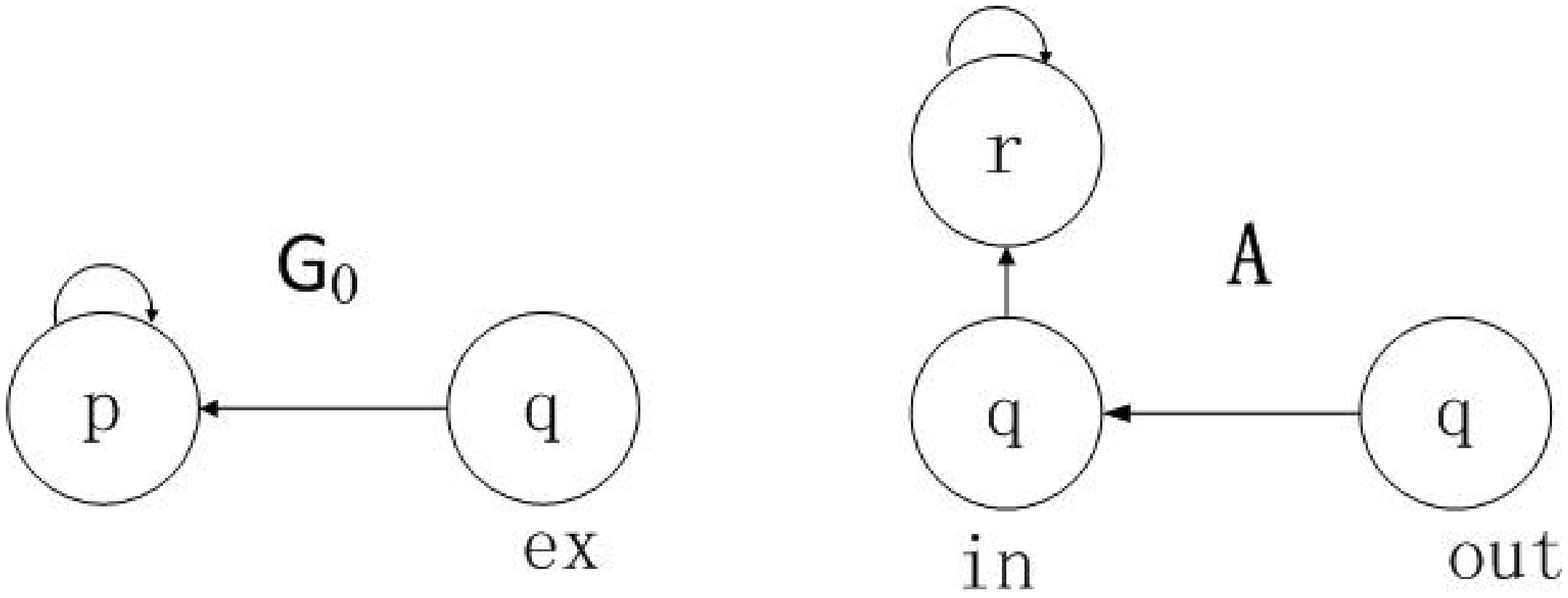}
\end{center}

Now we transform infinite Kripke Structures defined by simple graph grammars with restrictions into a finite Kripke Structure $K = (S,R,L)$
\begin{enumerate}
    \item $S = S_{G_0}  \cup ( S_A/(in_i)_1^N)/(out_i)_1^N$
    \item $R = R_{G_0} \cup \{(s_1,s_2) \in R_A | s_1,s_2 \notin (in_i)_1^N \cup (out_i)_1^N\}\cup\{(s_1, ex_i) | (s_1, in_i) \in R_A\ or\ (s_1, out _i) \in R_A, \ i \in[1,N]\} \cup \{(ex_i,s_1)| (in_i,s_1) \in R_A\}$
    \item $L(s) = L_{G_0}(s)$, if $s \in G_0$; $L(s) = L_A(s)$, if $s \in S_A$.
\end{enumerate}
The next step is using first algorithm to act on the Kripke structure $K$, and we will get the concrete smallest Kripke structure of the infinite Kripke structure with $(G_0,A)$ as its finite representation.
%-----------------------------------------------------------------------------------------------------------------------------------------------------------------------------------------------------------------------------------------------------------------
%-----------------------------------------------------------------------------------------------------------------------------------------------------------------------------------------------------------------------------------------------------------------

%-----------------------------------------------------------------------------------------------------------------------------------------------------------------------------------------------------------------------------------------------------------------
%-----------------------------------------------------------------------------------------------------------------------------------------------------------------------------------------------------------------------------------------------------------------
\end{document}